\newif\iflncs
\lncsfalse
\newif\ifnotes
\notestrue

\iflncs
\documentclass[orivec,runningheads,envcountsect,envcountsame,a4paper]{llncs}
\else
\documentclass[11pt]{article}
\fi

\iflncs
\usepackage{breakcites}
\else
\usepackage{amsthm}
\usepackage{fullpage}
\fi

\usepackage{subcaption}
\usepackage{algorithm}
\usepackage[noend]{algpseudocode}
\usepackage{amsfonts}
\usepackage{amsmath}
\usepackage{bbm}
\usepackage{complexity}
\usepackage{mdframed}
\usepackage{framed}
\usepackage{float}
\usepackage{array}
\usepackage{enumitem}
\usepackage[dvipsnames]{xcolor} 
\usepackage{graphicx}
\usepackage{tikz}
\usepackage{pgfplots}
\usepackage{multirow}
\usepackage{pdfpages}
\usepackage{booktabs}
\iflncs
\usepackage[colorlinks=true,linkcolor=blue,citecolor=blue]{hyperref}
\else
\usepackage[colorlinks=true,linkcolor=Maroon,citecolor=Maroon]{hyperref}
\fi
\usepackage{thm-restate}
\usepackage{orcidlink}                    
\usepackage[capitalize]{cleveref}  
\hypersetup{hypertexnames=false}

\newcommand{\remove}[1]{}

\ifnotes
\newcommand{\knote}[1]{[{\footnotesize \textsf{\color{orange}{\bf Kel Zin:} { {#1}}}}]}
\definecolor{ao}{rgb}{0.0, 0.5, 0.0}
\newcommand{\pnote}[1]{[{\footnotesize \textsf{\color{cyan}{\bf Prashant:} { {#1}}}}]}

\else
\newcommand{\knote}[1]{}
\definecolor{ao}{rgb}{0.0, 0.5, 0.0}
\newcommand{\pnote}[1]{}
\fi
\definecolor{mygreen}{RGB}{0,128,0} 
\newcommand{\resolved}[1]{{\color{mygreen}{[Resolved]}}}

\iflncs
\input{lncs_issues}

\let\manualqed\qed
\let\qedhere\relax
\spnewtheorem{claim}[theorem]{Claim}{\bfseries}{\itshape}
\else
\let\manualqed\relax
\newtheorem{theorem}{Theorem}[section]

\newtheorem{claim}[theorem]{Claim}
\newtheorem{corollary}[theorem]{Corollary}

\newtheorem{lemma}[theorem]{Lemma}
\newtheorem{proposition}[theorem]{Proposition}
\newtheorem{definition}[theorem]{Definition}
\theoremstyle{definition}

\fi

\iflncs
\renewcommand{\paragraph}[1]{\;\newline \noindent \textbf{#1}}
\fi

\crefname{definition}{Definition}{Definitions}
\crefname{sub-definition}{Definition}{Definitions}
\crefname{example}{Example}{Examples}
\crefname{exercise}{Exercise}{Exercises}
\crefname{property}{Property}{Properties}
\crefname{question}{Question}{Questions}
\crefname{solution}{Solution}{Solutions}
\crefname{theorem}{Theorem}{Theorems}
\crefname{informaltheorem}{Theorem}{Theorems}
\crefname{proposition}{Proposition}{Propositions}
\crefname{problem}{Problem}{Problems}
\crefname{lemma}{Lemma}{Lemmas}
\crefname{conjecture}{Conjecture}{Conjectures}
\crefname{corollary}{Corollary}{Corollaries}
\crefname{fact}{Fact}{Facts}
\crefname{claim}{Claim}{Claims}
\crefname{remark}{Remark}{Remarks}
\crefname{note}{Note}{Notes}
\crefname{figure}{Figure}{Figure}
\crefname{case}{Case}{Cases}
\crefname{proofsketch}{Proof Sketch}{Proof Sketches}
\crefname{pseudocodeline}{line}{lines}

\def\ddefloop#1{\ifx\ddefloop#1\else\ddef{#1}\expandafter\ddefloop\fi}

\def\ddef#1{\expandafter\def\csname bb#1\endcsname{\ensuremath{\mathbb{#1}}}}
\ddefloop ABCDEFGHIJKLMNOPQRSTUVWXYZ\ddefloop

\def\ddef#1{\expandafter\def\csname c#1\endcsname{\ensuremath{\mathcal{#1}}}}
\ddefloop ABCDEFGHIJKLMNOPQRSTUVWXYZ\ddefloop

\def\ddef#1{\expandafter\def\csname v#1\endcsname{\ensuremath{\overline{#1}}}}
\ddefloop ABCDEFGHIJKLMNOPQRSTUVWXYZabcdefghijklmnopqrstuvwxyz\ddefloop

\newcommand{\algfont}[1]{\mathsf{#1}}
\def\ddef#1{\expandafter\def\csname alg#1\endcsname{\ensuremath{\algfont{#1}}}}
\ddefloop ABCDEFGHIJKLMNOPQRSTUVWXYZ\ddefloop

\newcommand{\langfont}[1]{\mathnormal{#1}}
\def\ddef#1{\expandafter\def\csname lang#1\endcsname{\ensuremath{\langfont{#1}}}}
\ddefloop ABCDEFGHIJKLMNOPQRSTUVWXYZ\ddefloop

\def\ddef#1{\expandafter\def\csname v#1\endcsname{\ensuremath{\boldsymbol{\csname #1\endcsname}}}}
\ddefloop {alpha}{beta}{gamma}{delta}{epsilon}{varepsilon}{zeta}{eta}{theta}{vartheta}{iota}{kappa}{lambda}{mu}{nu}{xi}{pi}{varpi}{rho}{varrho}{sigma}{varsigma}{tau}{upsilon}{phi}{varphi}{chi}{psi}{omega}{Gamma}{Delta}{Theta}{Lambda}{Xi}{Pi}{Sigma}{varSigma}{Upsilon}{Phi}{Psi}{Omega}{ell}\ddefloop

\newcommand{\eps}{\varepsilon}

\newcommand{\Reals}{\mathbb{R}}
\newcommand{\Field}[1]{\mathbb{F}_{#1}}
\newcommand{\Integer}[1]{\mathbb{Z}_{#1}}

\DeclareMathOperator*{\Exp} {{\mathbf{E}}}
\DeclareMathOperator*{\Var} {{\mathbf{Var}}}

\newcommand{\prob}[2]       {\Pr_{#1}\left[ #2 \right]}
\let\exponential\exp
\renewcommand{\exp}[1]      {\Exp\left[ #1 \right]}
\newcommand{\expec}[2]      {\Exp_{#1}\left[ #2 \right]}

\newcommand{\round}[1]        {\left\lfloor#1\right\rceil}

\renewcommand{\poly}{\mathrm{poly}}
\renewcommand{\polylog}{\mathrm{polylog}}

\newcommand{\abs}[1]        {\left| #1\right|}

\newcommand{\floor}[1]      {\left\lfloor #1 \right\rfloor}
\newcommand{\ceil}[1]       {\left\lceil #1 \right\rceil}
\newcommand{\ip}[1]         {\langle #1 \rangle}

\usepackage{listofitems}
\newcommand\cycle[2][\,]{%
  \readlist\thecycle{#2}%
  (\foreachitem\i\in\thecycle{\ifnum\icnt=1\else#1\fi\i})%
}

\newcounter{pseudocodeline}
\crefname{pseudocodeline}{line}{lines}

\newcommand{\pseudoline}[1][]{%
  \refstepcounter{pseudocodeline}%
  \small\thepseudocodeline%
  \if\relax\detokenize{#1}\relax
  \else
    \label{#1}%
  \fi
  \>%
}

\newenvironment{pseudocode}{
  \setcounter{pseudocodeline}{0}
  \begin{tabbing}
    \qquad\=\qquad\=\qquad\=\qquad\=\qquad\=\qquad\=\qquad\= \kill
}{
  \end{tabbing}
}

\title{Sample-Preserving Search-to-Decision Reduction\\for Noisy Linear Equations over Large Moduli}

\iflncs
\author{}
\institute{}
\date{}
\else
\author{Andrej Bogdanov\footnote{\orcidlink{0000-0002-0338-6151} {\tt abogdano@uottawa.ca}. University of Ottawa.} 
\and Kel Zin Tan\footnote{\orcidlink{0009-0008-1099-0892} {\tt kelzin@u.nus.edu}. Department of Computer Science, National University of Singapore.} 
\and Prashant Nalini Vasudevan\footnote{\orcidlink{0000-0001-6880-795X} {\tt prashvas@nus.edu.sg}. Department of Computer Science, National University of Singapore.}}
\date{\today}
\fi

\begin{document}

\pagenumbering{roman}

\maketitle
\begin{abstract}

The Noisy Linear Equations problem involves finding the solution to a random linear system over a finite field given noisy evaluations. This is a generalisation of various learning problems widely used in cryptography, including Learning with Errors (LWE), Learning with Rounding (LWR), and Learning Parity with Noise (LPN).

We present a new sample-preserving search-to-decision reduction for Noisy Linear Equations that has complexity $\mathrm{poly}(n,m,\log q, q/\sigma)$, where $n$ is the dimension of the secret, $m$ is the number of samples, $q$ is the modulus, and $\sigma$ is the magnitude of the noise (defined suitably for the kind of noise involved). In particular, this implies search-to-decision reductions for LWE and LWR that run in time $\mathrm{poly}(n)$ even if the modulus $q$ is exponentially large, as long as the noise-to-modulus ratio $\sigma/q$ is non-negligible. Our reduction works for prime moduli $q$. If $q$ is composite, it instead produces a list that contains a small multiple of the solution.

Prior to this work, known search-to-decision reductions for LWE and LWR either did not preserve the parameters of the problem (i.e., $m$, $n$, $\sigma$, and $q$), or had complexity that scaled polynomially with either $\sigma$ or the largest prime factor of $q$.

\end{abstract}

\thispagestyle{plain}

\iflncs
\else
\newpage
\setcounter{tocdepth}{2}
\tableofcontents
\newpage
\fi

\pagenumbering{arabic}
\section{Introduction} 

The \emph{Noisy Linear Equations} problem asks to recover the solution to a random system of linear equations over $\Integer{q}$ given noisy outputs. Let $\cR : \Integer{q}^m \to \Integer{p}^m$ be a ``noise'' function. (This function could be randomised, and $p$ may or may not be the same as $q$.) Given a random matrix $A \in \Integer{q}^{m \times n}$ and the noisy evaluation $\cR(Ax)$ for a random $x\in\Integer{q}^n$, the goal for the \emph{Search} problem is to recover the solution $x$. For the \emph{Decision} problem, similarly given $A$, the goal is to distinguish the noisy evaluations $\cR(Ax)$ from a random vector in $\Integer{p}^m$.

This problem is clearly easy for some choices of the noise function $\cR$, such as the identity function. It is believed to be hard for certain choices of the noise function, such as the following:
\begin{itemize}
\item $\cR(v) = v + e  \text{ mod } q$, where each entry in $e$ is draws from the discrete Gaussian distribution. This is known as the \emph{Learning With Errors (LWE)} problem \cite{DBLP:journals/jacm/Regev09}. The hardness of LWE can be shown based on the worst-case hardness of lattice problems \cite{DBLP:journals/jacm/Regev09, DBLP:conf/stoc/Peikert09, DBLP:conf/stoc/BrakerskiLPRS13}
\item $\cR(v) = \round{ p\cdot v / q} \text{ mod } p$, where $p < q$ and $\round{\cdot}$ represents rounding to the nearest integer. This is called the \emph{Learning With Rounding (LWR)} problem \cite{DBLP:conf/eurocrypt/BanerjeePR12}, whose hardness follows from reductions from LWE \cite{DBLP:conf/eurocrypt/BanerjeePR12, DBLP:conf/crypto/AlwenKPW13, DBLP:conf/tcc/BogdanovGMRR16, DBLP:journals/iacr/Alperin-Sheriff16}.
\item $\cR(v)_i = v_i + a_i\cdot b_i$ for each $i\in[m]$, where $a_i$ is a random element of $\Integer{q}$ and $b_i$ is a biased random bit. This is the generalised \emph{Learning Parity with Noise (LPN)} problem \cite{DBLP:conf/crypto/BlumFKL93}. 
\end{itemize}
The above problems are of considerable importance to cryptography, and form the basis of numerous constructions of advanced cryptographic primitives~\cite{DBLP:journals/jacm/Regev09, DBLP:conf/focs/BrakerskiV11, DBLP:conf/eurocrypt/BanerjeePR12, BLMR13, BP14}.

\paragraph{Search-to-Decision Reductions.} In the regimes of interest to public-key cryptography, the planted solution $x$ is typically unique.  Search is thus at least as hard as decision.  A search-to-decision reduction establishes their equivalence:  It efficiently converts a distinguisher (decision algorithm) into a solution finder (search algorithm).  Cryptographic applications such as public-key encryption \cite{DBLP:journals/jacm/Regev09}, fully homomorphic encryption \cite{DBLP:conf/focs/BrakerskiV11}, and structured psedurorandom functions \cite{DBLP:conf/eurocrypt/BanerjeePR12, BLMR13, BP14} assume decision hardness.  A search-to-decision reduction increases our confidence in their security. In order for this equivalence to be as tight as possible, it is desirable for such reductions to be computationally efficient, preserve the number of samples and variables in the problem instance (i.e., the parameters $m$ and $n$), and preserve the noise function $\cR$.

\medskip
Recall that in the LWE problem, the noise function $\cR(v)$ corresponds to adding to each entry in $v$ a sample (called the ``error'') from a discrete Gaussian distribution, say of standard deviation $\sigma$, modulo $q$. The first search-to-decision reduction for LWE is from Regev's work that originally defined the problem \cite{DBLP:journals/jacm/Regev09}. This reduction preserves the error distribution, but requires the modulus $q$ to be prime. It is also not sample-preserving, with the search algorithm needing a $\tilde{O}(1/\eps^2)$-factor of extra samples, where $\eps$ is the advantage of the distinguisher. Further, the complexity\footnote{By ``complexity'', we typically mean the product of the running time and the inverse of the success probability.} of the reduction scales polynomially with the modulus $q$.

Several subsequent works have since extended this reduction to composite moduli $q$, under various conditions. Peikert~\cite{DBLP:conf/stoc/Peikert09} obtained a reduction for $q = p_1 \cdots p_k$ (for distinct primes $p_i$) that has runtime scaling polynomially with $(\sum_i p_i)$, under the requirement that the advantage of the distinguisher $\eps$ is larger than $\Omega(\sum_i \exponential(-p_i^2\sigma^2))$.  
Applebaum et al.~\cite{DBLP:conf/crypto/ApplebaumCPS09} obtained a reduction for $q = p^e$ (for prime $p$) that has runtime scaling polynomially with $p$ and $e$. Micciancio and Peikert~\cite{DBLP:conf/eurocrypt/MicciancioP12} combined both results and obtained a reduction for $q = p_1^{e_1} \cdots p_k^{e_k}$ (for distinct primes $p_i$) that has runtime scaling polynomially with $\sum (p_i + e_i)$ and a similar requirement on the distinguisher's advantage: $\eps > \Omega(\sum_i \exponential(-p_i^{2e_i}\sigma^2/q^2))$. All of these reductions also have a $\tilde{O}(1/\eps^2)$-factor increase in the number of samples needed by the search algorithm.

Later work by Brakerski et al.~\cite{DBLP:conf/stoc/BrakerskiLPRS13} showed how to reduce Decision LWE over a modulus $q'$ to Decision LWE over any modulus $q \leq q'$, with runtime scaling only polylogarithmically in $q'$. Combined with the other reductions mentioned above, this enables reductions from Search LWE over a nice $q'$ (that has efficient search-to-decision reductions) to Decision LWE over any $q \leq q'$. 
\[
    \text{Search-LWE}_{m^\prime, q^\prime, \sigma^\prime} \implies \text{Decision-LWE}_{m^\prime, q^\prime, \sigma^\prime} \implies \text{Decision-LWE}_{m, q, \sigma}
\]
The price of Brakersi et al.'s reduction is that the noise-to-modulus ratio of the resulting instance is larger: $\sigma/q = \Omega(\sqrt{n}\cdot \sigma'/q')$. The number of variables remains the same, while the number of samples decreases by $\Omega(n + \log \log q)$ (i.e., $m = m^\prime - \Omega(n + \log \log q)$).

Most relevant to our work is the search-to-decision reduction of Micciancio and Mol \cite{DBLP:conf/crypto/MicciancioM11}, which is the only known sample-preserving classical reduction for LWE.\footnote{The sample-preserving reduction of Applebaum et al. for LPN~\cite{DBLP:journals/joc/ApplebaumIK09} is based on similar ideas.}  Their reduction avoids the dependence on $q$ and has runtime scaling polynomially with the noise magnitude $\sigma$ instead.  In the special case when $q$ is of the form $p^e$ and the noise is restricted to be at most $(p/2)$ in each element, the complexity reduces to linear in $p$ (and polylogarithmic in $q$).  More recently, Sudo et al.~\cite{DBLP:journals/ieiceta/SudoHTY25} showed a quantum sample-preserving search-to-decision reduction for LWE that uses only two queries to the distinguisher, but succeeds only with probability inverse polynomial in $q$.

Using techniques related to those of Micciancio and Mol, Bogdanov et al. \cite{DBLP:conf/tcc/BogdanovGMRR16} effectively showed the only known search-to-decision reduction for LWR, which has complexity polynomial in the modulus $q$.\footnote{The focus of their work is different, but this result is implicit in the proof of their Lemmas 3 and 4.} 

\begin{table}[t]
\centering
\begin{tabular}{llllll}
\toprule
\textbf{Work} & \textbf{Problem} & \textbf{Moduli} & \textbf{Noise} & \textbf{Samples} & \textbf{Complexity} \\
\midrule
\cite{DBLP:journals/jacm/Regev09} & LWE & Prime & Preserving & $\tilde{O}(m/\eps^2)$ & $\poly(q)$ \\
\cite{DBLP:conf/eurocrypt/MicciancioP12} & LWE & $p_1^{e_1} \cdots p_{k}^{e_k}$ & Preserving & $\tilde{O}(m/\eps^2)$ & $\poly(\sum_{i} p_i + e_i)$ \\
\cite{DBLP:conf/stoc/BrakerskiLPRS13}+\cite{DBLP:conf/eurocrypt/MicciancioP12} & LWE & All & $O(\sqrt{n})$ loss  & $\tilde{O}(m/\eps^2)$ & $\poly(\sum_{i} p_i' + e_i')$ \\
\cite{DBLP:conf/crypto/MicciancioM11} & LWE & Prime Power & Preserving & Preserving & $\poly(\sigma)$ \\
\cite{DBLP:conf/stoc/BrakerskiLPRS13}+\cite{DBLP:conf/crypto/MicciancioM11} & LWE & All & $O(\sqrt{n})$ loss  & $m + \tilde{O}(n)$ & $\poly(\log q, \sigma)$ \\
\cite{DBLP:conf/tcc/BogdanovGMRR16} & LWR & Prime & Preserving & Preserving & $\poly(q)$ \\
\textbf{This Work} & LWE/LWR & Prime & Preserving & Preserving & $\poly(\log q, q/\sigma)$ \\
\bottomrule
\end{tabular}
\caption{Comparison of search-to-decision reductions. $n$ is the dimension of the secret, $m$ is the number of samples, $\sigma$ is the noise magnitude (ratio of moduli in case of LWR), $q$ is the modulus, and $\eps$ is the advantage of the decision algorithm. Complexity omits $\poly(m, n, 1/\eps)$ factors. For reductions involving \cite{DBLP:conf/stoc/BrakerskiLPRS13}, the reduction is from Search LWE with a possibly different modulus $q' = p_1'^{e_1'} \cdots p_{k}'^{e_k'}$.}
\label{tab:past_reductions}
\end{table}

\paragraph{Limitations of Existing Reductions.} The easy reduction from decision to search for Noisy Linear Equations preserves all parameters: secret dimension, number of samples, error magnitude, and modulus. To establish \emph{equivalence} of search and decision, so should the reverse reduction.  As described above, all currently known reductions either fail to preserve one or more of these, or else have running time that is polynomial in the modulus or error magnitude (and hence are not efficient when these are large). The features of these reductions are summarised in \cref{tab:past_reductions}.

\subsection{Our Results}

We construct a new parameter-preserving search-to-decision reduction for the Noisy Linear Equations problem for noise functions that are robust to $\ell_\infty$ perturbations (see below for details). Where it applies, our reduction is sample-preserving and, roughly speaking, its complexity scales with the noise-to-modulus ratio rather than with the magnitude of the noise or the modulus. Among other corollaries (see \cref{sec:corollaries}), our results imply the following for any prime modulus $q$:
\begin{itemize}
    \item Search-to-decision reduction for LWE that preserves the dimension of the secret $n$, number of samples $m$, modulus $q$, and error magnitude $\sigma$, with running time that is $\poly(n,m,\log{q},q/\sigma)$
    \item Search-to-decision reduction for LWR that preserves the dimension of the secret $n$, number of samples $m$, modulus $q$, and rounding modulus $p$, with running time that is $\poly(n,m,\log{q},p)$
\end{itemize}
In particular, our reduction for LWE runs in polynomial time even if the modulus $q$ is exponentially large, as long as the error rate $\sigma/q$ is inverse-polynomial. 

\medskip
More generally, our reduction applies to Noisy Linear Equations problems with a noise function $\cR$ that can tolerate small perturbations. We say $\cR$ is \emph{$B$-smooth} if for every $v$ of $\ell_\infty$-norm at most $B$, with sufficiently high probability over the choice of $A$, the distribution of $\cR(Ax+v)$ is very close to that of $\cR(Ax)$. 
LWE is smooth when $B$ is somewhat smaller than the error magnitude $\sigma$. LWR is smooth when the $B$ is somewhat smaller than the ratio of the moduli $q/p$. (See \cref{sec:prelims} for the relevant formal definitions and statements.)

\begin{theorem}[Informal, see \cref{theorem:main} and \cref{sec:corollaries}]
    Consider any $n,m,p,q \in \mathbb{Z}^+$, with $q$ being prime, and a $B$-smooth noise function $\cR: \Integer{q}^{m} \to \Integer{p}^{m}$. For the \emph{Noisy Linear Equations problem} over $\Integer{q}$ with sample count $m$, secret dimension $n$, and noise function $\cR$, there exists an algorithm that, given a decision algorithm that runs in time $T$ and has advantage $\eps$, solves the search problem in time $\poly(n, m, \log q, q/B\eps, T)$ with success probability $\tilde{\Omega}(B\eps/q)$.
\end{theorem}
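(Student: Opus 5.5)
Our plan follows the Micciancio--Mol template (search-to-decision via a Goldreich--Levin style predicate), but replaces their "guess a coordinate of $x$ modulo small numbers" step with a step that recovers a \emph{linear functional} of $x$ approximately, at resolution $q/B$. The first step is a standard hybrid and randomisation argument. It turns the distinguisher $D$ into a predictor. Given $(A, y)$ with $y=\cR(Ax)$ or uniform, $D$ has advantage $\eps$. We re-randomise the instance by perturbing $A$ to $A' = A - u r^{\top}$ for a random $u\in\Integer{q}^m$ with small ($\ell_\infty \le B$) entries and a random direction $r\in\Integer{q}^n$. Then $A'x = Ax - u\langle r,x\rangle$, so the sample is $\cR(A'x + u\cdot \langle r,x\rangle)$.

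The point is as follows. Suppose $c=\langle r, x\rangle$ is "small" in the sense that $u c$ has norm at most $B$; more precisely, we scale $u$ so that this holds for $c$ in a known window. Then by $B$-smoothness the pair $(A', y)$ is distributed close to a genuine instance. When $c$ is "large", $A'$ relates to $y$ in a way that is nearly independent of $A'x$, and the pair looks close to uniform. Thus $D(A', y)$ acts as a noisy oracle for the predicate "is $\langle r,x\rangle$ in a short interval around $0$", with advantage about $\eps$. Shifting by known multiples, we instead use $A' = A - u r^\top + (\text{known})$, i.e.\ we test $\langle r,x\rangle \approx t$ for chosen $t$. The interval width is about $q\cdot(\text{scale})/B$ relative to $q$, and this is why the density $B/q$ enters the success probability.

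Second, we convert this interval-membership oracle into exact recovery of $x$, using primality of $q$. For a fixed random $r$, querying the oracle for all shifts $t$ on a grid of spacing $\approx B$ costs $O(q/B)$ queries. This localises $\langle r,x\rangle$ to an interval of length $O(B)$. We then refine by working with $\lambda r$ for multipliers $\lambda$: an interval for $\langle r,x\rangle$ of length $L$ yields, for $\lambda=\lfloor q/(2L)\rfloor$-type scalings, a coarse interval for $\langle \lambda r, x\rangle$. Combining these in a binary-search / continued-fraction fashion ("modular rounding" as in approximate-GCD or Akavia--Goldwasser--Safra style learning of linear functions over $\Integer{q}$) pins down $\langle r,x\rangle$ exactly in $O(\log q)$ rounds. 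Primality of $q$ ensures that every nonzero $\lambda$ is invertible, so the scaled queries are again uniform. Doing this for $n$ linearly independent $r$'s and solving the system recovers $x$. The oracle's advantage is amplified by majority over fresh randomness ($A$ re-randomised via $A\mapsto A+ (\text{random})\,s^\top$-type secret shifts is \emph{not} available without changing $y$). So amplification instead uses the Micciancio--Mol averaging: with probability about $\eps$ over the sample $(A,y)$ the conditional advantage is at least $\eps/2$. We guess which case we are in, and this costs the $\tilde\Omega(\eps)$ factor in the success probability.

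The main obstacle will be the first step's claim that when $\langle r,x\rangle$ is outside the window, the hybrid $(A',y)$ is statistically close to uniform. This is needed for the oracle to actually separate the two cases, rather than just being correct on one side. We will handle it by choosing $u$ uniformly from a box of side about $B$, scaled by a random unit. For large $c$, the term $uc$ is then spread nearly uniformly over $\Integer{q}^m$ modulo the lattice structure, and a leftover-hash style argument over the random scaling should give closeness. We will also need the hybrid argument across the $q/B$ grid to retain advantage $\Omega(B\eps/q)$ at the true location. Together with the $\eps$ loss from fixing $(A,y)$, this yields the claimed success probability $\tilde\Omega(B\eps/q)$ and running time $\poly(n,m,\log q,q/(B\eps),T)$.
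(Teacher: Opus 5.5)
\textbf{There is a genuine gap.} It sits exactly at the point you flag as the main obstacle, and there is a second one in the recovery step.

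\emph{First gap: the large case need not look uniform.} Take a single rank-one perturbation $A'=A-ur^\top$ with $c=\langle r,x\rangle$ fixed. The pair is $(A',\cR(A'x+uc))$, and $A'x$ is a fixed function of $A'$. So closeness to $(A',\cR(b))$ requires $uc$ itself to look uniform on $\Integer{q}^m$, and $B$-smoothness says nothing about large perturbations (think of LWR). If $u$ lies in a box of side $O(B)$, then $uc$ has support of size at most $q(2B+1)^m$, a vanishing fraction of $\Integer{q}^m$, even after rescaling by a random unit. Moreover, the random unit would destroy the smallness of $uc$ that you need when $c$ is small. The two requirements are incompatible in a single-shot test.

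The paper never proves that the ``large'' case is uniform. Its predictor outputs $\algD(A+V_kY_k+vy^\top,\cR(Ax))$ with $k\gets[0,t-1]$, $V_k\gets\{0,1\}^{m\times k}$, $Y_k\gets\Integer{q}^{k\times n}$, and $t=\lceil 4\log(qm/\eps)\rceil$. For unconditioned $y$ this is exactly hybrid $\cH_{k+1}$. Conditioned on $|\langle y,x\rangle|\le B$, it is $\eps/4t$-close to $\cH_k$ by noise tolerance. $\cH_0$ is a real instance, and $\cH_t$ is near-uniform by the leftover hash lemma; the needed entropy comes from $t$ independent rank-one pieces. Telescoping gives advantage $\eps/4t$ between ``small'' and ``unconditioned'', and the law of total probability converts this into ``small vs.\ large''.

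\emph{Second gap: the oracle is not reliable on a fixed direction, and shifts are unavailable.} Your binary search treats the oracle as reliable for a specific direction $r$ and a specific shift. Even after fixing a good sample $(A,\cR(Ax))$, all you know is that $\algP(\cdot)$ has covariance $\Omega(B\eps/qt)$ with the threshold function $S$ on average over uniform $y$. On any individual $r$ it may be useless, and a sample-preserving reduction cannot amplify with fresh samples. Testing $\langle r,x\rangle\approx t$ would also require replacing $\cR(Ax)$ by $\cR(Ax-ut)$. That is impossible for general $\cR$ and works only for additive noise such as LWE. Adding known terms $W$ to $A$ only shifts by the unknown quantity $Wx$.

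\emph{The missing idea is Fourier-analytic.}
\begin{itemize}
\item $S(y)$ depends only on $\langle y,x\rangle$, so $\hat S$ is supported on $\{c\cdot x\}$.
\item For $f=\mathbf{1}(|t|\le B)$ one has $|\hat f(c)|\le 1/(q|\sin(\pi c/q)|)$, so the mass concentrates on small $|c|$.
\item The covariance then forces $|\hat{\algP}(c\cdot x)|\ge\tilde\Omega(\eps')$ for some $0<|c|\le\lceil 32/\eps'^2\rceil$, where $\eps'=B\eps/(2qt)$.
\item The Akavia--Goldwasser--Safra algorithm recovers a short list containing $c\cdot x$. It performs exactly the self-correction over random query points that your scheme lacks.
\end{itemize}
Primality of $q$ is used only at the end, to divide out $c$.
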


Beyond prime moduli, our reductions also work as above for moduli $q$ that do not have any small prime factors (see \cref{sec:corollaries} for details). Our reductions also work for general moduli $q$ in some special cases, such as the entries in the secret vector being very small. 

Otherwise, for general $q$, our search algorithm instead produces a list of size $\tilde{O}((B/q\eps)^2)$ of candidate vectors that contains a vector $c \cdot x$ for some $0 < |c| \leq \tilde{O}((B/q\eps)^2)$, where $x$ is the secret vector. If $c$ turns out to be coprime with $q$, then by trial and error over all elements in the list and all $c$, we would be able to recover the secret if equipped with an algorithm to verify the solution. 

When $c$ is not coprime to $q$, the output of our reduction reveals only partial information about the secret. This is, in fact, a limitation of all known parameter-preserving reductions for the following reason.
When $q$ is not prime, an LWE distinguisher could censor substantial information about the secret.  For example, if the modulus $q$ is even, the distinguisher might base its decision only on the value of the instance reduced modulo $q/2$.  The output of the distinguisher would then be identical for every pair of secrets congruent modulo $q/2$.  All known parameter-preserving reductions, including ours, either leave the secret intact or modify it by an affine transformation.  Such reductions can therefore at best recover the secret up to congruencies modulo small factors of $q$.  Bogdanov et al.~\cite[Lemma 4]{DBLP:conf/tcc/BogdanovGMRR16} point out a related limitation in a step of their reduction from search LWE to decision LWR.

\paragraph{Open Problems.} Neither our algorithm nor that of Micciancio and Mol~\cite{DBLP:conf/crypto/MicciancioM11} runs in time $\poly(n)$ in a parameter setting where the error magnitude is both $\omega(\poly(n))$ and $o(q/\poly(n))$. Ideally, we would like to have a parameter-preserving reduction whose complexity scales only with $\polylog(q)$ for all noise parameter settings. We do not see a fundamental barrier to such a reduction, but it seems out of the reach of existing techniques. The other open problem is to perform exact secret recovery reduction when $q$ is an arbitrary composite modulus, without any restrictions on the secret $x$.

\subsection{Technical Overview}

\label{sec:technical_overview}

In this subsection, we first analyse why most existing reductions have a complexity that scales linearly with the size of the modulus. Then we discuss how \cite{DBLP:conf/crypto/MicciancioM11} achieves a sample-preserving reduction that scales with the size of the error. Finally, we introduce our reduction technique.

\paragraph{Prior Reductions.} Regev's initial reduction for LWE \cite{DBLP:journals/jacm/Regev09} uses a coordinate-guessing technique to recover the secret. Make a guess $g\in\Integer{q}$ for the first co-ordinate of secret $x_1$. For each sample $(a,\ip{a,x} + e)$, it is modified to $(a + (b, 0, \dots, 0),\ip{a,x} + e + b \cdot g)$ where $b$ is randomly sampled from $\Integer{q}$. If $g = x_1$, then the distribution stays the same. Otherwise, it becomes random. Because the reduction has to iterate through all possible guesses, the reduction complexity grows with the size of the domain. Later works by \cite{DBLP:conf/stoc/Peikert09, DBLP:conf/crypto/ApplebaumCPS09, DBLP:conf/eurocrypt/MicciancioP12} extend this technique to support larger modulus $q = p_1^{e_1} \cdots p_k^{e_k}$. The rough idea is to use the Chinese Remainder Theorem and guess the value of the secret modulo the small primes. Therefore, the complexity of the reduction has a polynomial dependence on $\sum_i (p_i + e_i)$.

Micciancio and Mol obtain a sample-preserving reduction \cite{DBLP:conf/crypto/MicciancioM11} using an approach analogous to the sample-preserving reduction for LPN with $q = 2$ by Applebaum, Ishai and Kushilevitz \cite{DBLP:journals/joc/ApplebaumIK09}. The main idea is to use the distinguisher to construct a weak inner product predictor $\algP(y)$ that ``predicts'' $\ip{y,x} \text{ mod } q$, where $y \in \Integer{q}^n$ is an input and $x$ is the secret. This can be done by transforming $(A, Ax + e)$ to $(A + ry^\top, Ax + e + rg)$ where $r$ is a random vector from $\Integer{q}^{m}$ and $g \in \Field{q}$ is a guess for the value $\ip{y,x}$. If $g = y^\top x$, then the distribution stays the same. Otherwise, it becomes random. The predictor $\algP$ performs this transformation, runs the LWE distinguisher on the resulting instance, and outputs whatever it outputs. They show that this weak predictor $\algP(y)$ has significant Fourier weight at $x$, and use the learning algorithm from Akavia, Goldwasser and Safra \cite{AGS03} to recover its heavy Fourier coefficients, and thus the secret.

Making a guess for the value of $\ip{y, x} \bmod{q}$ degrades the advantage by $1/q$, which causes the overall learning algorithm to have a running time polynomial in $q$. To avoid this polynomial dependency on the modulus, they use the equivalence of LWE and Knapsack and instead construct a predictor for $\ip{y, e}$, where $e$ is the error vector. The domain of the predictor $\algP(y)$ is changed to $y \in \Integer{p}^m$, where $p$ is a prime and $p \geq \max e_i$, the output is then $\algP(y) = \ip{y, e} \text{ mod } p$. This prime $p$ scales with the error size and not with the original modulus $q$. When each term of the $y_i$ is at most $p$, the predictor guessing range for $g = \ip{y, e}$ (without modulo $p$) is reduced to $p \cdot m \cdot \max_i e_i$. They will make a guess $g \in [0, p \cdot m \cdot \max_i e_i]$ for the value of $\ip{y, e}$ without modulo $p$. The loss in advantage is then a factor of $1/\poly(p, m)$. Running the learning algorithm on such a predictor has running time polynomial in $\poly(p, \log q)$.

Ultimately, existing known reductions involve iterating through all possible guesses for some secret value, which incurs a reduction complexity that scales linearly with the size of the domain of this value ($q$ in the case of \cite{DBLP:journals/jacm/Regev09}, $|e|_\infty$ in the case of \cite{DBLP:conf/crypto/MicciancioM11}, etc.). If iteration through the domain is used in the reduction, it will be difficult to obtain a reduction that runs in time polylogarithmic in the size of the modulus, secret and error.  

\paragraph{Threshold Distinguisher.} 
To obtain a better dependence on $q$, we would need to use the single-bit output of the distinguisher more effectively than to rule out just one possible guess each time. Ideally, we hope that a single query to the distinguisher would split the search space in half, similar to a binary search. 

Consider a distinguishing algorithm $\algD$ for the Noisy Linear Equation problem that takes as input $(A,\cR(Ax))$ for some $A\in\Integer{q}^{m\times n}$ and $x\in\Integer{q}^n$, and outputs a bit from $\{-1,1\}$. We use this to construct an algorithm $\algP$ that takes as input $(A,\cR(Ax))$ and also a vector $y \in \Integer{q}^n$, and somehow signals whether $\abs{\ip{y, x}_q} \leq B$ or $\abs{\ip{y, x}_q} > B$, for some parameter $B$. The closer $B$ is to $q/2$, the better, because that reveals more information about inner produce $\ip{y,x}$, and thus the secret $x$. 

\medskip
To see how the construction is done, consider taking the input $(A, \cR(Ax))$ and adding $vy^\top \in \Integer{q}^{m \times n}$ to the matrix, where $v \gets \{0,1\}^m$. In other words, the input becomes
\[
    (A + vy^\top, \cR(Ax))
\]
Since $A$ is random, this distribution is essentially
\[
    (A + vy^\top, \cR(Ax)) \equiv (A, \cR(Ax - vy^{\top}x))
\]
If $y^{\top}x = \abs{\ip{y, x}_q} \leq B$ is small, this is like adding a small vector to the already noisy $\cR(Ax)$. If the noise function $\cR$ absorbs it, then the distribution is close to the noisy linear equations distribution. 
\[
    (A, \cR(Ax - vy^{\top}x)) \approx (A, \cR(Ax))
\]
Otherwise, if $y^{\top}x = \abs{\ip{y, x}_q}$ is large, it overwhelms the noise. And if we add many such $vy^\top$ to $A$, it essentially becomes randomised and independent of $\cR(Ax)$. 

\medskip
Unfortunately, we cannot afford to set $B = q/2$ because it will overwhelm the noise in most natural noise distributions. The actual $B$ that we use is the largest $B$ such that for any vector $z$ whose elements have magnitude at most $B$, the statistical distance between $\cR(Ax+z)$ and $\cR(Ax)$ is bounded by $O(\eps)$. 

In the proof, starting with the observation about adding $vy^\top$ mentioned above, we will use a hybrid argument to construct the algorithm $\algP$. Informally, what we will show is the following (\cref{sec:threshold_distinguisher}):
\[
    \left|
    \begin{aligned}
        & \Pr[\algP(y, A, \cR(Ax)) = 1 | \abs{\ip{y, x}_q} \leq B] \\
        &\quad - \Pr[\algP(y, A, \cR(Ax)) = 1 | \abs{\ip{y, x}_q} > B]
    \end{aligned}
    \right| \geq \tilde{\Omega}(\eps)
\]
Where $y \gets \Integer{q}^n$, $A \gets \Integer{q}^{m \times n}$, $x \gets \Integer{q}^n$

\paragraph{Heavy Fourier Coefficients.} 
Just having $\algP$ is still not enough to recover the secret $x$. As we would like to have a sample-preserving reduction, we use an idea similar to \cite{DBLP:conf/crypto/MicciancioM11, DBLP:conf/tcc/BogdanovGMRR16}, where the information about $x$ is obtained by looking into the Fourier coefficients of $\algP$. We show that for a good fixed input $(A, x, \cR(Ax))$, the function $\algP(y) = \algP(y, A, \cR(Ax))$ has a heavy Fourier coefficient at some character $c \cdot x$, where $c$ is a small integer and $x$ is the secret of the noisy linear equations. We can then recover a list containing $c \cdot x$ with  the algorithm of Akavia, Goldwasser, and Safra \cite{AGS03} (\cref{sec:learn_fourier_coefficients}).

To analyse the Fourier spectrum of $\algP$, we use an intermediate threshold function $S$ to isolate the important Fourier characters (\cref{def:threshold_function}). The function $S$ is defined as 
    \[
        S(y) = \begin{cases}
            -1 &\text{ if } \abs{\ip{y, x}_q} \leq B\\
            1 &\text{otherwise}
        \end{cases}
    \]
with $y$ being the input vector, and $x$ being the secret of the noisy linear equations. Because of the properties of $\algP$ described earlier, we can show that $\algP$ is correlated with $S$ (\cref{sec:corr_threshold}). 
\[
    \exp{(\algP(y) - \mu_P)(S(y) - \mu_S)} \geq \tilde{\Omega}(\dfrac{B\eps}{q})
\]
where $\mu_P = \exp{P}$ and $\mu_S = \exp{S}$. This correlation implies that $P$ must have some non-negligible Fourier weight on the heavy Fourier characters of $S$.

By definition of $S$, it only has non-zero Fourier coefficients at multiples of $x$. Furthermore, we show that the Fourier weight of $S$ concentrates on the characters $c \cdot x$ where $c$ is small (\cref{sec:heavy_fourier_coeff}). Due to their correlation, this implies that $\algP$ also contains at least one heavy Fourier coefficient at a character $c \cdot x$ where $c$ is small.

\paragraph{Recovering the Secret} Finally, if $q$ is a prime or $c$ turns out to be coprime with $q$, we can go through every element in the output list and try every possible $c$. Since the list and $|c|$ are bounded by $\poly(q/B\eps)$, this process finishes in time that grows polynomially with $q/B\eps$. If a verification algorithm exists, we can check each answer candidate and return the correct one. Alternatively, randomly pick an element from the list and choose a random $c$, incurring some loss in the success probability.

\subsection{Paper Outline}
In \cref{sec:prelims}, we present the notation and technical tools used throughout the paper.
In \cref{sec:reduction}, we prove the main theorem.
In \cref{sec:corollaries}, we derive various corollaries of our main theorem, including the reductions for LWE and LWR.

\section{Preliminaries}
\label{sec:prelims}

\paragraph{Notation} 
We use $x_i$ to denote the $i$-th bit of the binary string (vector) $x$ and $A_i$ to denote the $i$-th row of the matrix $A$. Let $\Field{p} $ denote the order-$p$ prime field. Let $\Integer{q} $ denote the integer ring modulo $q$. Write $[n] := \{1,2,\dotsc, n\}$. Write $\round{x}$ as the number $x$ rounded to the nearest integer. Write $\ip{y, x}_q \in \Integer{q}$ as the inner product of vectors $x$ and $y$ modulo $q$. Define $\abs{x}$, where $x \in \Integer{q}$, as $\abs{x} = \min(x^\prime, x^\prime - q)$ where $x^\prime$ is $x$ lifted to the integer ring.

\paragraph{Distributions} Denote $D_1 \equiv D_2$ if the distributions $D_1$ and $D_2$ are the same. Let $U(D)$ be the uniform distribution on the set $D$.
The notation $x \gets D$ means that $x$ is sampled from the distribution $D$; when $D$ is a set (we abuse notation), it means $x$ is sampled from $U(D)$. 

\begin{definition}[Discrete Gaussian Distribution]
    Define $D_\sigma$ as the discrete Gaussian distribution over integers with parameter $\sigma > 0$ and centre at 0. 
    \[
        \rho_\sigma(x) = \exponential(-x^2/2\sigma^2)
    \]
    \[
        \Pr[D_{\sigma} = x] = \rho_\sigma(x)/\sum_{y \in \mathbb{Z}} \rho_{\sigma}(y)
    \]
\end{definition}

\begin{definition}[Noisy Linear Equations]
\label{def:noisy-linear-eq}
    Let $n,m,p, q \in \mathbb{Z}^+$. For sample count $m$, secret dimension $n$, and a noise function $\cR: \mathbb{Z}_q^m \to \mathbb{Z}_p^{m}$, algorithms $\algS$ and $\algD$ are said to have advantage $\eps \in [0, 1]$ in solving the \emph{Search} and \emph{Decision} Noisy Linear Equations problem, respectively, if they satisfy the following conditions:
    \begin{itemize}[itemsep=0pt]
        \item \emph{Search:} Suppose $A\gets \Integer{q}^{m \times n}$, $x \gets \Integer{q}^n$, $\algS$ recovers the secret $x$ with probability at least $\eps$:
        \[
            \Pr[\algS(A, \cR(Ax)) = x] \geq \eps
       . \]
        
        \item \emph{Decision:} Suppose $A\gets \Integer{q}^{m \times n}$, $x \gets \Integer{q}^n$, $b \gets \Integer{q}^m$, $\algD$ distinguishes samples from uniform with advantage at least $\eps$:
        \[
            \abs{\Pr[\algD(A, \cR(Ax)) = 1] - \Pr[\algD(A, \cR(b)) = 1]} \geq \eps.
        \]
    \end{itemize}
\end{definition}

\begin{definition}[Learning With Error (LWE)]
    The \emph{Learning With Error (LWE)} problem with noise parameter $\sigma$ is a \emph{Noisy Linear Equations} problem with the noise function $\cR: \Integer{q}^m \to \Integer{q}^m$ defined as $\cR(y) = (y + e) \text{ \emph{mod} } q$, where $e \gets D_\sigma^m$.
\end{definition}

\begin{definition}[Learning With Rounding (LWR)]
    The \emph{Learning With Rounding (LWR)} problem with modulus $p < q$ is a \emph{Noisy Linear Equations} problem with the noise function $\cR: \Integer{q}^m \to \Integer{p}^m$ defined, for each $i\in[m]$, as $\cR(y)_i = \round{\dfrac{p}{q}y_i^\prime} \text{ \emph{mod} } p$, where $y_i^\prime$ is $y_i$ lifted to the reals.
\end{definition}

\subsection{Coprime secret}

Throughout the proof, we will only consider secrets of the noisy linear equations (that is, the vector $x$ in \cref{def:noisy-linear-eq}) that satisfy $\gcd(x, q) = \gcd(x_1, \dots, x_n, q) = 1$.
This is a mild assumption when we sample $x$ at random because the event of obtaining such a secret occurs with very high probability. We can precisely calculate the probability of this event using the Chinese Remainder Theorem.
\[
    \Pr[\gcd(x_1, \dots, x_n, q) = 1] = \prod_{\substack{p | q\\p \text{ is } prime}} (1 - \dfrac{1}{p^n}) \geq (1 - \dfrac{1}{2^n})^{\log q} \geq 1 - \dfrac{\log q}{2^n}
\]
which is a value that is exponentially close to $1$.

\begin{definition}
    Define $\cP(\Integer{q}^n) = \{x \in \mathbb{Z}_q^n \mid \gcd(x_1, \dots, x_n, q) = 1\}$
\end{definition}

\begin{proposition}
    \label{prop:coprime_secret} For any $q \in \Integer{\geq 2}$, 
    $TV(U(\cP(\Integer{q}^n)), U(\Integer{q}^n))  \leq \dfrac{\log q}{2^n}$
\end{proposition}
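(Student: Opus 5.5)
The plan is to use the standard identity for the total variation distance between the uniform distribution on a set and the uniform distribution on a subset of it, and then bound the size of the complement using the Chinese Remainder Theorem computation displayed just before the proposition.

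\textbf{Step 1.} Let $\Omega = \Integer{q}^n$ and $S = \cP(\Integer{q}^n) \subseteq \Omega$, and note $S$ is nonempty since it contains $(1,0,\dots,0)$. For $x \in S$ we have $\Pr[U(S)=x] = 1/|S| \ge 1/|\Omega| = \Pr[U(\Omega)=x]$, while for $x \notin S$ we have $\Pr[U(S)=x]=0$. Hence the total variation distance equals the total positive part of $U(\Omega)-U(S)$, which sits on $\Omega\setminus S$:
\[
TV(U(S),U(\Omega)) = \sum_{x\notin S} \frac{1}{|\Omega|} = 1 - \frac{|S|}{|\Omega|} = \Pr_{x\gets \Integer{q}^n}[\gcd(x_1,\dots,x_n,q)\neq 1].
\]

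\textbf{Step 2.} Bound this probability. The condition $\gcd(x,q)\ne 1$ holds exactly when some prime $p\mid q$ divides every $x_i$. A uniform $x\in\Integer{q}^n$ reduces modulo $p$ to a uniform element of $\Field{p}^n$, because $p \mid q$. So this event has probability $p^{-n}\le 2^{-n}$ for each such prime $p$. Since $q$ has at most $\log_2 q$ distinct prime factors, a union bound gives at most $\log q/2^n$. Alternatively, the exact product formula $\prod_{p\mid q}(1-p^{-n})$ follows from CRT independence of the residues modulo distinct prime factors, and Bernoulli's inequality then gives the bound, as in the display above. The union bound is the simpler route and is the one I will use.

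There is no real obstacle. The only points needing care are that the positive part of the difference lies entirely on the complement of $S$, which Step 1 handles, and that the number of distinct prime divisors of $q$ is at most $\log_2 q$. The second holds because each distinct prime factor is at least $2$, so their product, which divides $q$, is at least $2^k$ when there are $k$ of them.
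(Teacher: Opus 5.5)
Your proof is correct and follows the paper's route. Like you, the paper bounds $\Pr[\gcd(x_1,\dots,x_n,q)\neq 1]$ for uniform $x$, using that each prime $p\mid q$ contributes $p^{-n}\le 2^{-n}$ and that $q$ has at most $\log q$ prime factors. There are two small differences: the paper computes the exact CRT product $\prod_{p\mid q}(1-p^{-n})$ and applies Bernoulli's inequality where you use a union bound (which avoids needing independence across primes), and you explicitly verify the identity $TV(U(S),U(\Omega)) = 1-|S|/|\Omega|$, which the paper leaves implicit.
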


\subsection{Noise Tolerance}

In this subsection, we show that both LWE noise and LWR noise can absorb small deviations in the noisy codeword (which is the term we shall use to refer to $Ax$). Specifically, we will be adding a vector $v \gets \{0, B\}^m$ to the noisy codeword of LWE and LWR, where $B$ is something much smaller than the noise magnitude. We show that the total variation distance to the original distribution will still be close despite adding $v$. This property is essential for our search-to-decision reduction.

\begin{definition}[Noise Tolerance]
    \label{def:noise_tolerance}
    Let $n,m,p,q, B \in \mathbb{Z}^+$ and $\eps \in [0, 1]$. Let $\cR: \mathbb{Z}_q^m \to \mathbb{Z}_p^{m}$ be a noise function. We say that $\cR$ is \emph{$(B, \eps)$-noise tolerant} if for every $\abs{b} \leq B$ and every $v \in \{0, b\}^m$,
\[
    TV\Big( (A, \cR(Ax)), (A, \cR(Ax + v)) \Big) \leq \eps
\]
where $A \gets \mathbb{Z}_q^{m \times n}$ and $x \gets \cP(\Integer{q}^n)$.

Let $\cB^\prime(\cR,\eps) \in \Integer{}$ be the largest $B$ such that such that $\cR$ is $(B, \varepsilon)$-noise tolerant. The \emph{tolerance bound function} is defined as $\cB(\cR, \eps) = \min(\cB^\prime(\cR,\eps), \floor{q/4 - 1})$.
\end{definition}

\begin{proposition}[LWE Noise Tolerance]
    \label{prop:LWE_noise_flooding}
    Let $n,m,q,B \in \mathbb{Z}^+$, $\sigma \in \Reals$, and $\cR$ be the LWE noise function with noise parameter $\sigma$. Suppose $A \in \Integer{q}^{m \times n}$, $x \in \Integer{q}^n$, $v \in \{0, b\}^m$ where $\abs{b} \leq B$, then
    \[
     TV\Big((A, \cR(Ax)), (A, \cR(Ax + v))\Big) \leq \dfrac{B\sqrt{m}}{2\sigma}
    \]
    Which implies $\cB(\cR, \eps) \geq \min(2\eps\sigma/\sqrt{m}, \floor{q/4 - 1})$.
\end{proposition}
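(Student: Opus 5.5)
The first step is to reduce to the error vector alone. Fix $A$ and $x$ and write $w = Ax \in \Integer{q}^m$. The map $(A,z) \mapsto (A, z \bmod q)$ is applied identically to both distributions, so by the data-processing inequality it suffices to bound the total variation distance between $(A, w + e)$ and $(A, w + v + e)$, with $e \gets D_\sigma^m$ taken over the integers. The first coordinate $A$ is the same in both, and the shift by $w$ is a bijection. Hence the distance is at most $TV(D_\sigma^m, v + D_\sigma^m)$, uniformly in $A$ and $x$. This also covers random $A$ and $x$, because the distance between mixtures with the same mixing weights is at most the maximum of the conditional distances.

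Next, $v\in\{0,b\}^m$, and $D_\sigma^m$ is a product distribution. Coordinates with $v_i=0$ therefore contribute nothing, and the problem reduces to comparing $D_\sigma^k$ with $D_\sigma^k$ shifted by $b\mathbf{1}$, where $k\le m$. I would bound this using Pinsker's inequality together with the KL divergence. Directly, $\mathrm{KL}(D_\sigma + b \,\|\, D_\sigma) = \Exp_{z\gets D_\sigma}[\,((z+b)^2 - z^2)/2\sigma^2\,] = b^2/2\sigma^2$, since the discrete Gaussian centred at $0$ has mean $0$ by symmetry and its normalising constant cancels. KL divergence adds over independent coordinates, so the total is $k b^2/2\sigma^2 \le mB^2/2\sigma^2$. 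Pinsker then gives $TV \le \sqrt{\mathrm{KL}/2} = \sqrt{m}\,|b|/(2\sigma) \le B\sqrt{m}/(2\sigma)$, which is exactly the stated constant. The computation itself is short. The step that needs care is the justification: the shift must be applied before reduction mod $q$ so that the integer Gaussian computation is valid, and the symmetry $\Exp[z]=0$ of $D_\sigma$ must be used to kill the cross term.

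For the consequence, take $B = \floor{2\eps\sigma/\sqrt{m}}$. Then $B\sqrt{m}/(2\sigma)\le\eps$, so $\cR$ is $(B,\eps)$-noise tolerant. Since $\cP(\Integer{q}^n)$ is just a particular distribution on $x$, the uniform-in-$x$ bound applies to it as well. This gives $\cB'(\cR,\eps)\ge B$, and hence $\cB(\cR,\eps)\ge \min(B,\floor{q/4-1})$. If the statement is meant without the floor, this holds up to integer rounding of $2\eps\sigma/\sqrt{m}$, since $\cB'$ is integer-valued by definition.
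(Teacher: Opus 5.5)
Your proposal is correct and follows the paper's proof essentially step for step: data processing to reduce to $TV(e, e+v)$ over the integers, the per-coordinate KL computation $b^2/2\sigma^2$ using the zero mean of $D_\sigma$, additivity over coordinates, and Pinsker's inequality. Your remark that the consequence should really read $\min(\lfloor 2\eps\sigma/\sqrt{m}\rfloor, \lfloor q/4-1\rfloor)$, because the tolerance bound is integer-valued, is a correct minor sharpening of the paper's statement.
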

\begin{proof}
    Expanding the noise function $\cR$, then by data processing inequality
    \begin{align*}
         TV\Big((A, \cR(Ax)), (A, \cR(Ax + v))\Big) &= TV\Big((A, Ax + e), (A, Ax + e + v)\Big)\\
         &\leq TV(e, e+v)
    \end{align*}
    Where $e \gets D_\sigma^m$ and addition is done over $\Integer{}$.
    
    Without loss of generality, assume that $b \geq 0$. Looking at a single dimension $i \in [m]$ such that $v_i = b$, and write down the KL divergence
    \begin{align*}
        D_{KL}(b + e_i\ \|\ e_i)  &= \sum_{z \in \Integer{}} \Pr[b + e_i = z] \log \dfrac{\Pr[b + e_i = z]}{\Pr[e_i = z]}\\
        &=  \sum_{z \in \Integer{}} \Pr[b + e_i = z] \log \exponential(\dfrac{(z)^2}{2\sigma^2} - \dfrac{(z-b)^2}{2\sigma^2})\\
        &= \sum_{z \in \Integer{}} \Pr[b + e_i = z]  {\dfrac{2bz -b^2}{2\sigma^2}}
    \end{align*}    
    Let $Y = b + e_i$, and simplify the statement above
    \[
        \sum_{z} \Pr[Y = z]  {\dfrac{2bz -b^2}{2\sigma^2}} = \expec{Y \sim b + D_{\sigma}}{\dfrac{2bY - b^2}{2\sigma^2}} = \dfrac{2b\expec{}{Y} - b^2}{2\sigma^2} = \dfrac{b^2}{2\sigma^2} \leq \dfrac{B^2}{2\sigma^2}
    \]
    For $m$ dimensions, Pinsker's Inequality bounds the total variation by
    \[
        TV(e + v, e)  \leq \sqrt{\dfrac{1}{2} D_{KL}(e + v \ \|\ e)} = \sqrt{\dfrac{m}{2} D_{KL}(e_i + v_i \ \|\ e_i)} \leq \dfrac{B\sqrt{m}}{2\sigma}.
    \hfill\qedhere \]
\manualqed\end{proof}

\begin{proposition} [LWR Noise Tolerance]
    \label{prop:LWR_noise_resilient}
    Let $n,m,p,q \in \mathbb{Z}^+$, $p < q$ and $\cR$ be the LWR noise function with rounding modulus $p$. Suppose $A \gets \Integer{q}^{m \times n}$, $x \in \cP(\Integer{q}^n)$, $v \in \{0, b\}^m$ where $\abs{b} \leq B < q/p$, then
    \[
     TV\Big((A, \cR(Ax)), (A, \cR(Ax + v))\Big) \leq \dfrac{pBm}{q}
    \]
    Which implies $\cB(\cR, \eps) \geq \min(q\eps/pm, \floor{q/4 - 1})$.
\end{proposition}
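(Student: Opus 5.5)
The plan is to reduce to a per-coordinate statement and then use a union bound (subadditivity of total variation over independent coordinates). Since $A$ appears identically in both distributions, the total variation distance is at most the average over $A$ of the distance between $\cR(Ax)$ and $\cR(Ax+v)$ given $A$. Each coordinate is $\cR(Ax+v)_i = \round{p(\ip{A_i,x}+v_i)/q} \bmod p$, a deterministic function of $\ip{A_i,x}+v_i$. The rows $A_i$ are independent, so both vectors are products of coordinate distributions. Total variation of product distributions is at most the sum of the coordinate distances, so it suffices to bound each coordinate by $pB/q$. Coordinates with $v_i = 0$ contribute nothing.

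For a coordinate with $v_i=b$, first use $x \in \cP(\Integer{q}^n)$: since $\gcd(x,q)=1$, the map $a \mapsto \ip{a,x}_q$ sends uniform $a \in \Integer{q}^n$ to uniform $\Integer{q}$. (Its image is the ideal generated by the $x_j$ in $\Integer{q}$, which is all of $\Integer{q}$, and it is a surjective homomorphism, so every fibre has the same size.) Hence $u=\ip{A_i,x}$ is uniform on $\Integer{q}$, and we must bound
\[
TV\big(\round{pu/q} \bmod p,\ \round{p(u+b)/q} \bmod p\big),
\]
where $u$ is uniform. By joint coupling, this is at most $\Pr_u[\round{pu/q} \ne \round{p(u+b)/q} \bmod p]$.

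The rounding map partitions $\Integer{q}$ (cyclically) into $p$ intervals, one per output value. The rounded values of $u$ and $u+b$ differ only if a boundary point lies between $u$ and $u+b$. Assume wlog $b\ge 0$; the case $b<0$ is symmetric. Each of the $p$ boundaries is crossed by at most $b \le B$ values of $u$, so the probability is at most $pB/q$. The condition $B<q/p$ ensures $b$ is less than an interval length. This is not strictly needed for the count, but it keeps the bound meaningful and rules out wrap-around overcounting. Care is needed with the wrap-around at $q$ and the mod-$p$ identification of the last and first intervals, but that only helps the count.

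Summing over the at most $m$ coordinates gives $pBm/q$. For the corollary, choose $B=\floor{q\eps/pm}$. Then the distance is at most $\eps$, so $\cR$ is $(B,\eps)$-noise tolerant, and the bound on $\cB(\cR,\eps)$ follows from its definition. The main step needing care is the uniformity of $\ip{A_i,x}$, which uses the coprimality assumption, together with the boundary-crossing count; the rest is routine.
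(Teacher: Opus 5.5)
Your proposal is correct and is essentially the paper's proof. Coprimality of $x$ makes each $\ip{A_i,x}$ uniform on $\Integer{q}$. The distance is bounded by the probability that the coupled outputs $\cR(Ax)$ and $\cR(Ax+v)$ differ, each of the $p$ rounding boundaries is crossed by at most $B$ values of $u$, and summing over the $m$ coordinates gives $pBm/q$. The only slip is in wording: the per-coordinate quantity must either keep $A_i$ or be stated directly as $\Pr_u[\round{pu/q}\neq\round{p(u+b)/q}]$, because the marginal distance between $\round{pu/q}$ and $\round{p(u+b)/q}$ for uniform $u$ is trivially zero. The bound you actually compute, the disagreement probability, is the right one.
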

\begin{proof}
    Without loss of generality, assume that $b \geq 0$. Let $E$ be the event that $\cR(Ax) = \cR(Ax + v)$. This event happens when every value in $Ax$ is far away from the boundary where rounding occurs. In other words, for each coordinate value $(Ax)_i$, lifted to the reals $(Ax)_i^\prime$, we have that $\round{\dfrac{p}{q}(Ax)_i^\prime} \equiv \round{\dfrac{p}{q}((Ax)_i^\prime + b)}$ mod $p$. The total variation distance can then be upper-bounded using this event.
    \[
        TV\Big((A, \cR(Ax)), (A, \cR(Ax + v))\Big) \leq \Pr[\bar{E}]
    \]
    As $\gcd(x, q) = 1$ and $A$ is randomly sampled, $Ax$ is distributed just as a random vector over $\Integer{q}^m$. For a list of elements in $\Integer{q}$ that round to the same element in $\Integer{p}$, only the last $b$ elements in the list would round to a different element after adding the value $b$, while the rest stay the same. 
    
    This means that at most $pB$ elements in $\Integer{q}$ will change value after addition. So for a random element in $\Integer{q}$ the probability that it falls into this boundary region is at most $pB/q$. By union bound, the probability of $\bar{E}$ is upper bounded.
    \[
        TV\Big((A, \cR(Ax)), (A, \cR(Ax + v))\Big) \leq \Pr[\bar{E}] \leq \dfrac{pBm}{q}.
    \hfill\qedhere \]
\manualqed\end{proof}

\subsection{Fourier Analysis}

Here we describe the basics of Fourier Analysis and establish the choice of basis in the paper. Denote $\omega_q = e^{-2\pi i/q}$.

\begin{definition}[Fourier Transform]
\label{def:ft}
    Let $n,q \in \Integer{}^+$. The Fourier transform of a function $S: \Field{q}^{n} \to \mathbb{C}$ is another function $\hat{S}:\Field{q}^n\to\mathbb{C}$ defined as
    \[
        \hat{S}(\chi) = \expec{y \gets \Integer{q}^m}{S(y)\omega_q^{\ip{\chi, y}}}
    \]
\end{definition}

\begin{proposition}[Inverse Fourier Transform] 
\label{prop:inverse-ft}
Let $\hat{S}$ be the Fourier transform of $S:\Field{q}^n\to\mathbb{C}$. Then
\[
    S(y) = \sum_{\chi \in \Integer{q}^m} \hat{S}(\chi) \omega_q^{-\ip{\chi, y}}
\]
\end{proposition}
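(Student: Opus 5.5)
We substitute the definition of $\hat{S}$ (\cref{def:ft}) into the right-hand side and swap the two finite sums; orthogonality of characters then collapses everything to $S(y)$. Throughout, the index sets of the expectation in \cref{def:ft} and of the sum in the statement are both taken to be $\Integer{q}^n$, the domain of $S$; the superscript $m$ there is a typo for $n$. Also, $\omega_q^{k}$ depends only on $k \bmod q$, so the inner products may be computed modulo $q$.

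First, plug in $\hat{S}(\chi) = q^{-n}\sum_{z \in \Integer{q}^n} S(z)\,\omega_q^{\ip{\chi,z}}$. Since both sums are finite, we may exchange them:
\[
\sum_{\chi} \hat{S}(\chi)\,\omega_q^{-\ip{\chi,y}} = \sum_{z} S(z)\cdot \frac{1}{q^n}\sum_{\chi \in \Integer{q}^n} \omega_q^{\ip{\chi, z-y}}.
\]

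Second, prove the orthogonality relation: for $w \in \Integer{q}^n$, the average $q^{-n}\sum_{\chi} \omega_q^{\ip{\chi,w}}$ equals $1$ if $w = 0$ and $0$ otherwise. The sum factors over coordinates as
\[
\prod_{j=1}^n \frac{1}{q}\sum_{\chi_j \in \Integer{q}} \omega_q^{\chi_j w_j}.
\]
If $w_j = 0$, the factor is $1$. If $w_j \neq 0 \pmod q$, then $\zeta = \omega_q^{w_j}$ is a $q$-th root of unity with $\zeta \neq 1$, because $q \nmid w_j$ and $\omega_q$ is primitive. The geometric series then gives $\sum_{k=0}^{q-1}\zeta^k = (\zeta^q - 1)/(\zeta - 1) = 0$. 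This argument needs only that $\omega_q$ is a primitive $q$-th root of unity, not that $q$ is prime, so it is valid over $\Integer{q}$ for composite $q$ as well.

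Finally, applying orthogonality with $w = z - y$ kills every term except $z = y$, which leaves exactly $S(y)$. The only point needing care is the primitivity of $\omega_q$ for composite $q$; beyond that, the argument is routine.
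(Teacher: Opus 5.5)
Your proof is correct and complete. The paper states this proposition without proof as a standard fact, and your argument (substitute the definition of $\hat{S}$, exchange the finite sums, apply character orthogonality, which holds for any modulus because $\omega_q$ is a primitive $q$-th root of unity) is the standard one; you are also right that the superscript $m$ in the index set should be $n$.
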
    

\begin{proposition}[Parseval–Plancherel]
    \label{prop:parseval}
    Let $n,q \in \Integer{}^+$, functions $P, S : \Field{q}^m \to \mathbb{C}$
    \begin{align*}
        \expec{y \gets \Integer{q}^n}{P(y)\overline{S(y)}} &= \expec{y \gets \Integer{q}^n}{(\sum_{\chi \in \Integer{q}^m} \hat{P}(\chi)\omega_q^{-\ip{\chi,y}})(\sum_{\chi \in \Integer{q}^m} \overline{\hat{S}(\chi)}\omega_q^{\ip{\chi,y}})} \\
        &= \sum_{\chi \in \Integer{q}^m}\hat{P}(\chi)\overline{\hat{S}(\chi)}
    \end{align*}
\end{proposition}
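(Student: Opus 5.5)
Expand both sides by the inverse Fourier transform (\cref{prop:inverse-ft}) and then apply orthogonality of characters. First substitute
\[
P(y)=\sum_{\chi}\hat P(\chi)\omega_q^{-\ip{\chi,y}}, \qquad \overline{S(y)}=\sum_{\chi'}\overline{\hat S(\chi')}\,\omega_q^{\ip{\chi',y}},
\]
where the second identity uses $\overline{\omega_q^{k}}=\omega_q^{-k}$, valid because $\omega_q$ lies on the unit circle. This yields the first displayed equality. Multiplying out the two sums and using linearity of expectation, the left-hand side becomes
\[
\sum_{\chi,\chi'}\hat P(\chi)\overline{\hat S(\chi')}\;\expec{y}{\omega_q^{\ip{\chi'-\chi,y}}}.
\]
Throughout, all vectors are read as living in $\Integer{q}^n$; the statement's mix of $m$, $n$, $\Field{q}$ and $\Integer{q}$ is a typographical inconsistency, and the argument works over $\Integer{q}$ for any $q$.

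The key step is the orthogonality relation $\expec{y\gets\Integer{q}^n}{\omega_q^{\ip{\alpha,y}}}=\mathbbm{1}[\alpha=0]$. To prove it, note that $y$ is uniform with independent coordinates, so the expectation factors as
\[
\prod_{j=1}^n \expec{y_j\gets\Integer{q}}{\omega_q^{\alpha_j y_j}}.
\]
If $\alpha_j=0$, the factor is $1$. If $\alpha_j\neq 0 \pmod q$, then $\zeta=\omega_q^{\alpha_j}\neq 1$ while $\zeta^q=1$, so the geometric sum gives $\sum_{t=0}^{q-1}\zeta^t=(\zeta^q-1)/(\zeta-1)=0$. This holds for composite $q$ as well, since it uses only that $\zeta$ is a nontrivial $q$-th root of unity. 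Hence the product is $1$ exactly when $\alpha=0$, and $0$ otherwise.

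Applying this with $\alpha=\chi'-\chi$ kills every off-diagonal term, leaving $\sum_\chi \hat P(\chi)\overline{\hat S(\chi)}$ as claimed. All sums are finite, so exchanging sum and expectation needs no justification.

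The only point requiring any care is this orthogonality computation over $\Integer{q}$. The rest is bookkeeping with conjugates, together with consistency of the sign convention $\omega_q=e^{-2\pi i/q}$ between \cref{def:ft} and \cref{prop:inverse-ft}. That convention can be checked by plugging the definition of $\hat S$ into the inverse formula and invoking the same orthogonality relation.
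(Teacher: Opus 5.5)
Your proof is correct and takes the same approach as the paper. The paper likewise substitutes the inverse Fourier expansions and lets the expectation collapse onto the diagonal $\chi=\chi'$; your explicit orthogonality computation over $\Integer{q}^n$, which holds for any $q$, supplies the step the paper leaves implicit.
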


\noindent In particular, this implies the following for any such functions $P$
\[
\expec{y \gets \Integer{q}^n}{\abs{P(y)}^2} = \sum_{\chi \in \Integer{q}^m}\abs{\hat{P}(\chi)}^2
\]
\section{Reduction}
\label{sec:reduction}
This section describes the proof of the main theorem. 

\begin{theorem}
    \label{theorem:main}
     Consider any $n,m,p,q \in \mathbb{Z}^+$, a noise function $\cR: \Integer{q}^{m} \to \Integer{p}^{m}$, and any $\eps \in (0, 1]$. Let $t = \ceil{4\log (qm/\eps)}$, denote by $B$ the noise tolerance bound $\cB(\cR, \eps/4t)$, and set $\eps^\prime = (B\eps/2qt)$.
     
     Suppose there exists an algorithm for the \emph{Decision Noisy Linear Equations problem} over $\Integer{q}$ with sample count $m$, secret dimension $n$, and noise function $\cR$, and that the algorithm runs in time $T$ and has advantage $\eps$. 
     
     Then there exists an algorithm as follows. Given an input $(A, \cR(Ax))$, where $A \gets \Integer{q}^{m \times n}$ and $x \gets \cP(\Integer{q}^n)$, it outputs a list of size $\tilde{O}(1/(\eps^\prime)^2)$ that contains $c \cdot x$ for some $c \in \Integer{q} \setminus \{0\}$ such that $|c| \leq \ceil{32/(\eps^\prime)^2}$. This algorithm runs in time $\poly(n, m, \log q, 1/\eps^\prime, T)$ and successfully outputs such a list with probability $\Omega(\eps^\prime)$.
\end{theorem}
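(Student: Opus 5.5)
The plan has three stages: build a threshold predictor $\algP$ from the distinguisher $\algD$, show it correlates with the threshold function $S$, and then learn the heavy Fourier coefficients of $\algP$ with the algorithm of Akavia, Goldwasser and Safra.

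\textbf{Stage 1: the predictor.} On input $y\in\Integer{q}^n$ and the instance $(A,\cR(Ax))$, the predictor $\algP$ picks a random hybrid index $k\in\{0,\dots,t\}$. It samples $v^{(1)},\dots,v^{(k)}\gets\{0,1\}^m$, runs $\algD$ on
\[
\Bigl(A+\textstyle\sum_{j\le k} v^{(j)}y^\top,\ \cR(Ax)\Bigr),
\]
and outputs the answer with a sign depending on $k$.

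Since $A$ is uniform and $\gcd(x,q)=1$, this input is distributed as $(A',\cR(A'x-wb))$, where $b=\ip{y,x}_q$ and $w=\sum_j v^{(j)}$. I will compare the two regimes.

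\begin{itemize}
\item If $\abs{b}\le B$, each single step changes the distribution by at most $\eps/4t$, using $(B,\eps/4t)$-noise tolerance and \cref{prop:coprime_secret}. Over $t$ steps the drift is at most $\eps/4$.
\item If $\abs{b}>B$, then after $t=\ceil{4\log(qm/\eps)}$ random $\{0,1\}$ additions the shift $wb$ should be close to uniform mod $q$ coordinatewise. This would make the instance look like $\cR$ of a uniform vector, so $\algD$ sees the ``random'' case.
\end{itemize}

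\textbf{Main obstacle.} The second bullet is not right as stated: a sum of $t$ bits times $b$ is not uniform mod $q$ for large $q$. I would fix this by instead choosing, for each $j$, $v^{(j)}\in\{0,2^{j}\}^m$ (or $\{0,b_j\}$ with geometric scales). Then a random combination covers all shifts mod $q$ up to a $2^{-t}$-type error. Noise tolerance is applied only to the one hybrid step whose scale times $b$ is at most $B$.

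The averaging over the hybrid index $k$ then gives a gap of $\Omega(\eps/t)$ between $\Pr[\algP=1\mid \abs{b}\le B]$ and the case $\abs{b}>B$. The $1/t$ loss comes from the random choice of $k$, and the event $\abs{b}\le B$ has probability about $B/q$. I expect this hybrid construction, and exactly quantifying when the walk randomises, to be the hardest part.

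\textbf{Stage 2: correlation and Fourier concentration.} From Stage 1 I get $\Exp[(\algP-\mu_P)(S-\mu_S)]\ge\Omega(B\eps/qt)$ on average over $(A,x)$. By Markov, with probability $\Omega(\eps')$ over the instance, the fixed function $y\mapsto\algP(y,A,\cR(Ax))$ has covariance at least $\eps'$ with $S$.

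Next I use the structure of $S$. It depends only on $\ip{y,x}_q$, and $\gcd(x,q)=1$, so $\hat S$ is supported on $\{cx\}$. Moreover $\hat S(cx)$ is a normalised interval sum, so $\abs{\hat S(cx)}\lesssim 1/\abs{c}$, and the Fourier weight of $S$ at $\abs{c}>K$ is $O(1/K)$. By Parseval (\cref{prop:parseval}), restricted to $c\ne0$,
\[
\eps'\le\sum_{c\ne0}\hat P(cx)\overline{\hat S(cx)}.
\]
I split this sum at $K=\ceil{32/(\eps')^2}$. The tail is bounded by Cauchy--Schwarz by $\sqrt{O(1/K)}\le\eps'/2$. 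The remaining $\le 2K$ terms each have $\abs{\hat S}\le1$, so some $\abs{c}\le K$ satisfies $\abs{\hat P(cx)}\ge\eps'/4K$. That is polynomially large in $1/\eps'$.

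\textbf{Stage 3: learning.} I run the AGS algorithm, with query access to $\algP$ via fresh internal randomness, to get a list of size $\tilde O(1/\text{threshold}^2)$ containing every character with $\abs{\hat P}$ above the threshold. To match the claimed list size $\tilde O(1/(\eps')^2)$, I would sharpen Stage 2: I would use $\sum_c\abs{\hat P(cx)}/\abs{c}\gtrsim\eps'$ and pigeonhole with weights $1/\abs{c}$, so that $\abs{\hat P(cx)}\gtrsim\eps'$ for some small $c$. The overall running time is $\poly(n,m,\log q,1/\eps',T)$, and the success probability is $\Omega(\eps')$, coming from the Markov step.
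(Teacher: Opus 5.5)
\textbf{There is a genuine gap in Stage~1, which is the heart of the theorem.} Your predictor reuses the query $y$ in every hybrid step. So every hybrid has the form $(A',\cR(A'x-wb))$ with $b=\ip{y,x}_q$, and your two bullets then pull in opposite directions for \emph{any} choice of step distribution.

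The first bullet needs every step to be covered by noise tolerance whenever $\abs{b}\le B$. That forces each step to shift by $0$ or at most $\pm b$, so $w$ takes only $O(t)$ values and $wb$ is far from uniform mod $q$. The second bullet needs $w$ to spread over roughly $q/\abs{b}$ residues. But then the shifts are large even when $\abs{b}\le B$, so that branch gets randomized as well.

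Your geometric fix $v^{(j)}\in\{0,2^j\}^m$ lands on the second side. The step at scale $2^j$ shifts by $2^jb$, which noise tolerance controls only if $\abs{b}\le B/2^j$. For a typical $b$ with $\abs{b}\le B$, only the first step is controlled. The hybrid index therefore tests a mixture of thresholds $B/2^k$ rather than $B$, and an arbitrary $\algD$ may behave however it likes on the uncontrolled intermediate distributions. Nothing in your argument then yields the asserted $\Omega(\eps/t)$ gap between $\abs{b}\le B$ and $\abs{b}>B$. Separately, for composite $q$, $wb$ is at best uniform on $b\Integer{q}$, while the theorem is stated for all $q$.

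\textbf{The missing idea is to randomize with fresh vectors rather than multiples of $b$, and to use $y$ in exactly one step.} The paper queries $\algD(A+V_kY_k+vy^\top,\cR(Ax))$ with $k\gets[0,t-1]$, $V_k\gets\{0,1\}^{m\times k}$, $Y_k\gets\Integer{q}^{k\times n}$ and $v\gets\{0,1\}^m$. The hybrids $\cH_k=(A+V_kY_k,\cR(Ax))$ do not depend on $y$, which gives two facts:
\begin{itemize}
\item For uniform $y$, the query is distributed exactly as $\cH_{k+1}$.
\item Conditioned on $\abs{\ip{y,x}_q}\le B$, the query is $\eps/4t$-close to $\cH_k$. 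This is a single use of noise tolerance, with $A\mapsto A+V_kY_k$ treated as post-processing.
\end{itemize}
Since $\gcd(x,q)=1$, $Y_tx$ is uniform. The leftover hash lemma with binary $V$ then makes $\cH_t$ close to $(A,\cR(u))$ for uniform $u$, so binary steps suffice. Telescoping gives a gap of $\eps/2t$ between the conditional and unconditional acceptance probabilities, hence at least $\eps/4t$ between $\abs{\ip{y,x}_q}\le B$ and $\abs{\ip{y,x}_q}>B$.

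\textbf{From there your Stages~2 and~3 match the paper, provided you use your sharpened pigeonhole.} The bound $\sum_{0<\abs{c}\le\tau}\abs{\hat f(c)}\le 1+\ln\tau$ gives some $c$ with $\abs{\hat{\algP}(cx)}=\Omega(\eps'/\log(1/\eps'))$. This is not $\gtrsim\eps'$, but the logarithm is absorbed by the $\tilde O$. Your first bound $\eps'/4K$ would only give a $\poly(1/\eps')$ list, not the claimed $\tilde O(1/(\eps')^2)$.
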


Proof overview is described in \cref{sec:technical_overview}

\paragraph{Section Outline} In \cref{sec:threshold_distinguisher} and \cref{sec:terminal_hybrid}, we will talk about the threshold distinguisher. \cref{sec:corr_threshold} is on the correlation of $\algP$ and $S$. \cref{sec:heavy_fourier_coeff} and \cref{sec:fourier_analysis_f} are on the concentration of the Fourier weight of $S$. Finally, in \cref{sec:learn_fourier_coefficients} we state the algorithm from \cite{AGS03} to learn heavy Fourier coefficients.

\subsection{Threshold Distinguisher}
\label{sec:threshold_distinguisher}

In this subsection, we describe the construction of a threshold distinguisher $\algP$ that can distinguish whether $\abs{\ip{y, x}_q} \leq B$ or $\abs{\ip{y, x}_q} > B$. Note that the inner product is performed over $\Integer{q}$.

\begin{lemma}
    \label{lemma:dist_adv}
    Let $n,m,p, q \in \mathbb{Z}^+$, $\eps \in (0, 1]$, a noise function $\cR: \Integer{q}^{m} \to \Integer{p}^{m}$, $t = \ceil{4\log (qm/\eps)}, B = \cB(\cR, \eps/4t)$. 
    Suppose there exists an algorithm $\algD$ for the \emph{Decision Noisy Linear Equations problem} over $\Integer{q}$ with sample count $m$, secret dimension $n$, noise function $\cR$, that has advantage $\eps$. 
    
    Then there exists an algorithm $\algP$ that outputs a value in $\{-1,1\}$ such that 
    \[
        \left|
        \begin{aligned}
            &\Pr[\algP(y, A, \cR(Ax)) = 1 | \abs{\ip{y, x}_q} \leq B]\\
            &\quad - \Pr[\algP(y, A, \cR(Ax)) = 1 | \abs{\ip{y, x}_q} > B]
        \end{aligned}
        \right| \geq \eps/4t
    \]
    Where $y \gets \Integer{q}^n$, $A \gets \Integer{q}^{m \times n}$, $x \gets \cP(\Integer{q}^n)$.
\end{lemma}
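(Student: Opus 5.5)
We will prove \cref{lemma:dist_adv} by a hybrid argument over the number of rank-one perturbations added to the matrix. For $k \in \{0,1,\dots,t\}$, define hybrid $H_k$ as the distribution of $(A + \sum_{j=1}^{k} v_j y^\top, \cR(Ax))$, where $v_1,\dots,v_k \gets \{0,1\}^m$ are independent. Since $A$ is uniform, we can substitute $A \mapsto A - \sum_j v_j y^\top$. This shows $H_k \equiv (A, \cR(Ax + \sum_{j\le k} v_j \cdot s))$ with $s = -\ip{y,x}_q$. Equivalently, $H_k$ adds to the codeword the vector $s\cdot w$, where $w = \sum_j v_j$ has i.i.d.\ $\mathrm{Bin}(k,1/2)$ entries.

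The algorithm $\algP$ on input $(y, A, \cR(Ax))$ picks $k \gets \{1,\dots,t\}$ (or a fixed $k$ found by averaging), samples $v_1,\dots,v_k$, runs $\algD$ on $(A + \sum_{j\le k} v_j y^\top, \cR(Ax))$, and outputs the result, possibly negated. The analysis has three steps.

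\emph{Step 1 (small case).} Condition on $\abs{\ip{y,x}_q} \le B$. Going from $H_{k-1}$ to $H_k$ adds $v_k\cdot s$ with $v_k \in \{0,1\}^m$. Since $A$ is uniform and independent of the earlier $v_j$, re-randomizing $A$ turns this into adding a vector in $\{0,s\}^m$ to $\cR(Ax')$ for a fresh uniform matrix. By the definition of $B = \cB(\cR,\eps/4t)$ (\cref{def:noise_tolerance}), each step therefore changes $\Pr[\algD=1]$ by at most $\eps/4t$. Thus every $H_k$ is within $k\eps/4t \le \eps/4$ of $H_0 = (A,\cR(Ax))$, and $\algD$ accepts $H_k$ with probability close to that of real samples.

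\emph{Step 2 (large case).} Condition on $\abs{\ip{y,x}_q} > B$. Here we want $H_t$ to be close to $(A', \cR(Ax))$ with $A'$ uniform and independent of $\cR(Ax)$, which we then compare with $(A,\cR(b))$. Since $s \neq 0$, we argue that $A + w y^\top$ is close to uniform given $(A, x)$. One route is to use that $y$ is uniform and not revealed, apart from the conditioning on the event, so the shift $w y^\top$ hides $A$. The cleaner route is to argue over $\Integer{q}$: the shift $s w$ with $w \sim \mathrm{Bin}(t,1/2)^m$ is close to uniform mod $q$ only when $t \gtrsim q^2$, which is far too large. This is the main obstacle. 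I would instead exploit the randomness of $y$. Condition on $\ip{y,x}_q = s$ only. Then $y$ is uniform on an affine hyperplane, and $A + w y^\top$ with $w \ne 0$ has joint distribution with $Ax$ obtained by shifting only along the $x$ direction. The residual dependence on $w$ passes through the scalar $s w$, which is a random walk of $t$ steps with step size $s$. I expect that the choice $t = \ceil{4\log(qm/\eps)}$ is exactly what guarantees that some hybrid gap appears. The argument would not need $H_t$ to be fully random. It would only need $\sum_k \big(\Pr[\algD(H_k)=1 \mid \text{large}] - \Pr[\algD(H_{k-1})=1 \mid \text{large}]\big)$ summed across the large case to telescope to something of order $\eps$, using that the decision advantage $\eps$ is against $(A,\cR(b))$ and that $H_t$ in the large case is close to that distribution.

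\emph{Step 3 (averaging).} Combine the two cases. In the small case the acceptance probability is within $\eps/4$ of $p_{\mathrm{real}}$. In the large case, with $k$ uniform or at the extreme $k=t$, the acceptance probability is within $\eps/4$ of $p_{\mathrm{rand}}$. Since $\abs{p_{\mathrm{real}}-p_{\mathrm{rand}}}\ge\eps$, some $k$ yields a gap of at least $\eps/4t$ between the conditional acceptance probabilities. The loss factor $1/t$ comes from selecting a random hybrid index. The delicate part I expect to spend effort on is the closeness to uniform in Step 2. There I would try first to bound the statistical distance using the $q$-ary Fourier coefficients of $s\cdot\mathrm{Bin}(t,1/2)$ combined with the averaging over $y$.
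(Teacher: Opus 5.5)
There is a genuine gap, and it sits in your Step 2, which carries the whole argument. Every perturbation in your hybrids reuses the same $y$. The substitution $A' = A + wy^\top$ therefore gives exactly $H_k \equiv (A', \cR(A'x - sw))$, with $s = \ip{y,x}_q$ and $w \in \{0,\dots,k\}^m$. So the hybrid depends on $y$ only through $s$: there is no residual randomness in $y$ left to exploit. For $q \gg t$ the shift $sw$ is far from uniform, and nothing forces $\algD$ to treat $(A', \cR(A'x - sw))$ with $\abs{s} > B$ like $(A, \cR(b))$.

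Concretely, take LWE with $m \ge 3n$ and $q \gg (t\sigma)^2\log m$. Let $\algD$ accept $(M, b)$ iff there are $x', s'$ such that every coordinate of $b - Mx'$ is within $O(\sigma\sqrt{\log m})$ of the set $s'\cdot\{0,1,\dots,t\}$. It accepts real samples w.h.p. A union bound over the $q^{n+1}$ pairs $(x', s')$ shows it rejects uniform $b$ w.h.p., so its advantage is close to $1$. Yet on every one of your queries $b - Mx = e - sw$, so it accepts (with $x' = x$, $s' = -s$) w.h.p.\ whatever $s$ is. Your $\algP$ thus has negligible gap between the small and large cases for every $k$. No averaging over $k$ or Fourier estimate for $s\cdot\mathrm{Bin}(t,1/2)$ can rescue it, and the lemma puts no efficiency restriction on $\algD$.

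Step 1 has a lesser problem with the same origin. Absorbing the earlier offset $sw_{k-1}$ requires shifting $A$ by $w_{k-1}y^\top$, which is correlated with $x$. The noise-tolerance definition only bounds the distance after averaging over $x$, so it does not cover this.

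The missing idea is to make all perturbations but one independent of $y$. The paper's $\algP$ samples $k \gets [0, t-1]$ and queries $\algD(A + V_kY_k + vy^\top, \cR(Ax))$ with $Y_k \gets \Integer{q}^{k\times n}$ \emph{fresh}. Set $\cH_k = (A + V_kY_k, \cR(Ax))$ and let $E$ be the event $\abs{\ip{y,x}_q} \le B$. Conditioned on $E$, the query is $\eps/4t$-close to $\cH_k$; here $V_kY_k$ is independent post-processing, so noise tolerance applies directly. Without any conditioning, $y$ is just one more fresh row, and the query is \emph{exactly} $\cH_{k+1}$.

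The terminal hybrid is $(A, \cR(Ax - V(Yx)))$, where $Yx$ is uniform in $\Integer{q}^t$ because $\gcd(x,q) = 1$. Random subset-sums of $t$ uniform elements are close to uniform by the leftover hash lemma already for $t = O(\log(qm/\eps))$. Your binomial walk would instead need $t$ on the order of $q^2$.

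Telescoping over a random $k$ then gives $\abs{\Pr[\algP = 1 \mid E] - \Pr[\algP = 1]} \ge \eps/2t - \eps/4t$. The identity $\Pr[\algP=1\mid E] - \Pr[\algP=1] = \Pr[\bar E]\,(\Pr[\algP=1\mid E] - \Pr[\algP=1\mid \bar E])$ turns this into the small-versus-large statement. Note that the large case is never shown to resemble $(A, \cR(b))$ --- it need not. The comparison your Step 3 is built on is therefore not the one that works: the right one is the small-conditioned query against the unconditioned query.
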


\begin{proof}
    The construction of $\algP$ is as follows:
    \begin{pseudocode}
        {\bf Algorithm $\algP$:} On input $(y, A, b = \cR(Ax))$ and Decision algorithm $\algD$, \\
        \pseudoline Sample $k \gets [0, t - 1]$ and $v \gets \{0, 1\}^m$\\
        \pseudoline Sample matrices $V_k \gets \{0, 1\}^{m \times k}$ and $Y_k \gets \Integer{q}^{k \times n}$ \\
        \pseudoline[algo:oracle_call] Output $\algD(A + V_kY_k + vy^\top, b)$
    \end{pseudocode}
    
    We use the hybrid argument to prove the lemma. First, define the hybrid distribution $\cH_k$ over $\Integer{q}^{m \times n} \times \Integer{p}^m$ as
    \[
        \cH_k = (A + V_k Y_k, \cR(Ax))
    \]
    \begin{claim}
        \label{claim:hybrid_close}
        For any $k \in \Integer{\geq 0}$, let $H = (A + V_kY_k + vy^\top, \cR(Ax))$.
        \begin{itemize}[itemsep=0pt]
            \item If $\abs{\ip{y, x}_q} \leq B$, then $TV(H, \cH_k) \leq \eps/4t$
            \item If no condition is imposed, then $H \equiv \cH_{k+1}$
        \end{itemize}   
    \end{claim}
     \begin{proof}
         Suppose that $\abs{\ip{y, x}_q} \leq B$, rewrite
        \[
             H \equiv (A + V_kY_k + vy^\top, \cR(Ax)) \equiv (A + V_kY_k, \cR(Ax - vy^\top x))
        \]
        The vector $vy^\top x = v \ip{y, x}_q$ is uniformly random in $\{0, \ip{y, x}_q\}^m$. By the definition of noise tolerance (\cref{def:noise_tolerance}), and the fact that B = $\cB(\cR, \eps/4t)$,
        \[
            TV(H, \cH_k) \leq \eps/4t
        \]   
        On the other hand, if no condition is imposed on $y$, it is uniformly random and
        \[
            H \equiv (A + V_kY_k + vy^\top, b) \equiv (A + V_{k+1}Y_{k+1}, b) \equiv \cH_{k+1}. \hfill\qedhere
        \]
     \manualqed\end{proof}
    It is clear from the definition that $\cH_0$ is distributed as a Noisy Linear Equation instance. 
    We claim that once $t$ is large enough, the terminal hybrid $\cH_t$ is distributed as a random instance.
    \begin{restatable}{claim}{terminalhybrid}\emph{(Terminal hybrid).}
        \label{claim:terminal_hybrid}
        Let $t = \ceil{4\log (qm/\eps)}$, $A \gets \Integer{q}^{m \times n}, b \gets \Integer{q}^m$,
        \[
             TV\Big(\cH_t, (A, \cR(b))\Big) \leq \eps^2/2qm \ll \eps
        \]
    \end{restatable}
        
     We defer the proof of the claim to the next section (\cref{sec:terminal_hybrid}). With this claim, we can then use hybrid argument to conclude a $\eps/2t$ advantage in distinguishing between $\cH_k$ and $\cH_{k+1}$ where $k \gets [0, t-1]$.
    \begin{restatable}{proposition}{hybridargument}
        \label{prop:hybrid_argument}
        \[
            \abs{\prob{k \gets [0, t-1]}{D(\cH_{k}) = 1} - \prob{k \gets [0, t-1]}{D(\cH_{k+1}) = 1}} \geq \eps/2t
        \]
    \end{restatable}
    \begin{proof}
        Standard hybrid argument with \cref{claim:terminal_hybrid}. Details deferred to \cref{proof:hybrid_argument}
    \manualqed\end{proof}

    Let $H$ be the distribution sent to the decision algorithm $\algD$ in \cref{algo:oracle_call} of the algorithm $\algP$. Using \cref{claim:hybrid_close} and \cref{prop:hybrid_argument}, we obtain a distinguisher that behaves differently under the condition of $\abs{\ip{y, x}_q} \leq B$ and no condition. i.e.
    \[
        \abs{\prob{y \gets \Integer{q}^{n}}{D(H) = 1\ |\ \abs{\ip{y, x}_q} \leq B} - \prob{y \gets \Integer{q}^{n}}{D(H) = 1}} \geq \eps/2t - \eps/4t = \eps/4t
    \]
    Where the loss of $\eps/4t$ is from the total variation distance between $H$ and $\cH_{k}$ when $\abs{\ip{y, x}_q} \leq B$.

    The condition $\abs{\ip{y, x}_q} > B$ can be imposed on the second term at no cost.  Let $E$ be the event that $\abs{\ip{y, x}_q} \leq B$. By law of total probability,
    \begin{align*}
        &\abs{\prob{}{D(H) = 1|E} - \prob{}{D(H) = 1}} \\
        =& \Pr[\Bar{E}] \cdot  \abs{\prob{}{D(H) = 1|E} - \prob{}{D(H) = 1|\Bar{E}}}
    \end{align*}
        
    Which implies
    \[
        \abs{\prob{}{D(H) = 1|E} - \prob{}{D(H) = 1|\Bar{E}}} \geq \dfrac{\eps/4t}{\Pr[\Bar{E}]} \geq \eps/4t. \hfill\qedhere
    \]
\manualqed\end{proof}

\subsection{Terminal Hybrid}
\label{sec:terminal_hybrid}
Here, we show the proof of the claim that the terminal hybrid is close to the uniform distribution.

\terminalhybrid*
    \begin{proof}
        Start by writing 
        \[
            \cH_t = (A + VY, \cR(Ax)) \equiv (A, \cR(Ax - VYx))
        \]
        Where $A \gets \Integer{q}^{m \times n}$, $V \gets \{0,1\}^{m \times t}$, $Y \gets \Integer{q}^{t \times n}, x \gets \cP(\Integer{q})$.
        
        Since $Y$ is a random matrix in $\Integer{q}^{t \times n}$, and $\gcd(x, q) = 1$, $y = Yx$ is uniformly random in $\Integer{q}^m$, even given $x$.  As every row $v$ of $V$ is a random $\{0, 1\}$ vector of min-entropy $2^{-t}$, and dot product with $y$ is a universal hash, by the leftover hash lemma~\cite{ILL89}, $(y, \langle v, y\rangle)$ is $\tfrac12 \sqrt{q 2^{-t}}$-statistically close to $(y, r)$ for a uniformly random $r$.  By a hybrid argument, the statistical distance between
        \[ (y, \langle v_1, y\rangle, \dots, \langle v_m, y\rangle) \quad\text{and}\quad  (y, r_1, \dots, r_m), \]
        where $v_i$ are the rows of $V$, is at most $m$ times larger.  In particular, $Vy$ is $\tfrac12 m \sqrt{q 2^{-t}}$-close to uniform.  By our choice of parameters, this is at most 
        \[ \tfrac12 m \sqrt{q(\eps/qm)^4}\leq \eps^2/2qm \tag*{\qedhere}\]
    \manualqed\end{proof}

\subsection{Correlation With Threshold Function}

\label{sec:corr_threshold}

\cref{lemma:dist_adv} essentially shows that the algorithm $\algP$ is correlated with the indicator for the event $\abs{\ip{y, x}_q} \leq B$.  For analytic purposes it will be convenient to express this correlation in terms of the covariance of $\algP$ and a random variable defined by sampling the inputs to the following function at random.

\begin{definition}[Threshold Function]
    \label{def:threshold_function}
    Let $n, q, B \in \Integer{}^+$. Define $S: \Integer{q}^{n} \times \Integer{q}^{n} \to \{-1, 1\}$ as
    \[
        S(y, x) = \begin{cases}
            1 &\text{ if } \abs{\ip{y, x}_q} \leq B\\
            -1 &\text{otherwise}
        \end{cases}
    \]
\end{definition}

The conversion from difference in conditional expectations to the covariance of $\pm 1$-valued random variables is given by the following formula.
\begin{lemma}
    \label{lemma:correlation}
    For any $\{-1, 1\}$-valued random variables $X$ and $Y$,
    \[
        \exp{(X - \mu_X)(Y - \mu_Y)} = \dfrac{\Var(Y)}{2} \cdot (\exp{X|Y=1} - \exp{X|Y=-1}),
    \]
    where $\mu_X = \exp{X}$, $\mu_Y = \exp{Y}$.
\end{lemma}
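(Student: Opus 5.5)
The plan is to reduce everything to the joint distribution of the pair $(X,Y)$, which lives on $\{-1,1\}^2$. Write $p = \Pr[Y=1]$, $a = \exp{X|Y=1}$ and $b = \exp{X|Y=-1}$. We may assume $0<p<1$; otherwise $\Var(Y)=0$ and both sides vanish (the conditional expectations are then undefined but are multiplied by zero).

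The first step is to expand the covariance as $\exp{(X-\mu_X)(Y-\mu_Y)} = \exp{XY} - \mu_X\mu_Y$. Each term can be written in terms of $p$, $a$ and $b$:
\begin{itemize}
\item $\mu_Y = p - (1-p) = 2p-1$;
\item $\mu_X = pa + (1-p)b$;
\item conditioning on $Y$ gives $\exp{XY} = p\cdot a\cdot 1 + (1-p)\cdot b\cdot(-1) = pa - (1-p)b$.
\end{itemize}

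The second step is to substitute and simplify:
\[
\exp{XY}-\mu_X\mu_Y = pa-(1-p)b - (2p-1)\bigl(pa+(1-p)b\bigr).
\]
Collecting the coefficient of $a$ gives $p - p(2p-1) = 2p(1-p)$. Collecting the coefficient of $b$ gives $-(1-p) - (2p-1)(1-p) = -2p(1-p)$. So the covariance equals $2p(1-p)(a-b)$.

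The final step is to identify the prefactor with the variance of $Y$. Since $Y\in\{-1,1\}$ we have $Y^2=1$, so $\Var(Y) = 1-\mu_Y^2 = 1-(2p-1)^2 = 4p(1-p)$. Therefore $2p(1-p) = \Var(Y)/2$, which is exactly the stated identity.

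There is no real obstacle here. The only care needed is the bookkeeping of the sign in $\exp{XY}$ on the event $Y=-1$, and the degenerate case $p\in\{0,1\}$.
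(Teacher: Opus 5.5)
Your proof is correct and follows essentially the same route as the paper. Both expand the covariance as $\exp{XY}-\mu_X\mu_Y$, condition on $Y$, and identify the common factor $2\Pr[Y=1]\Pr[Y=-1]$ with $\Var(Y)/2$ via $\Var(Y)=4\Pr[Y=1]\Pr[Y=-1]$; your treatment of the degenerate case $\Pr[Y=1]\in\{0,1\}$ is a small addition the paper omits.
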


\begin{proof}
    We can rewrite the variance of $Y$ in terms of probabilities 
    \begin{align*}
        \Var(Y) &= \exp{Y^2} - \exp{Y}^2\\
        &= 1 - \left((1 - \Pr[Y = -1]) - \Pr[Y = -1]\right)^2\\
        &= -4\Pr[Y = -1]^2 + 4\Pr[Y = -1]\\
        &= 4\cdot\Pr[Y = -1](1 - \Pr[Y = -1]) \\
        &= 4\Pr[Y = -1]\Pr[Y = 1]
    \end{align*}
    Expand $\exp{(X - \mu_X)(Y - \mu_Y)}$ using law of total probability,
    \begin{align*}
        &\exp{(X - \mu_X)(Y - \mu_Y)} \\
        = &\exp{XY} - \mu_X\mu_Y\\
        =& \exp{X| Y = 1}\Pr[Y = 1]  - \exp{X| Y = -1}\Pr[Y = -1]\\
        &- (\exp{X| Y = 1}\Pr[Y = 1] + \exp{X| Y = -1}\Pr[Y = -1])\mu_Y\\
        =& \exp{X| Y = 1}\Pr[Y = 1](1 - \mu_Y) + \exp{X| Y = -1}\Pr[Y = -1](-1 - \mu_Y)
    \end{align*}
    Substitute $\mu_Y = \Pr[Y = 1] - \Pr[Y = -1]$, then 
    \[ 1 - \mu_Y = 2\Pr[Y = -1] \text{ and } (-1 - \mu_Y) = -2\Pr[Y = 1] \]
    Finally with $\Var(Y) = 4\Pr[Y = -1]\Pr[Y = 1]$,
    \begin{align*}
         &\exp{XY} - \mu_X\mu_Y  \\
         = &2\exp{X| Y = 1}\Pr[Y = 1]\Pr[Y = -1] - 2\exp{X| Y = -1}\Pr[Y = -1]\Pr[Y = 1]\\
        = &\dfrac{\Var(Y)}{2}(\exp{X| Y = 1} - \exp{X| Y = -1}). \tag*{\qedhere}
    \end{align*}
\manualqed\end{proof}
Applying \cref{lemma:correlation} to $\algP$ and $S$ gives the following corollary.
\begin{corollary}
    \label{cor:correlation}
    Using the same parameter definitions as in \cref{lemma:dist_adv},
    \[
        \exp{(\algP(y, A, \cR(Ax)) - \mu_P)(S(y, x) - \mu_S)} \geq (\dfrac{\Var(S)}{2})(\eps/4t) \geq \dfrac{B\eps}{2qt}
    \]
    Where $A \gets \Integer{q}^{m \times n}$, $x \gets \cP(\Integer{q}^n)$, $y \gets \Integer{q}^n$.
\end{corollary}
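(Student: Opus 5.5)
We apply \cref{lemma:correlation} with $X = \algP(y, A, \cR(Ax))$ and $Y = S(y,x)$, where $A \gets \Integer{q}^{m \times n}$, $x \gets \cP(\Integer{q}^n)$, $y \gets \Integer{q}^n$ are sampled jointly and $\algP$'s internal randomness is included in $X$. Since $X$ is $\{-1,1\}$-valued, $\exp{X \mid Y=1} - \exp{X \mid Y=-1} = 2(\Pr[X=1 \mid Y=1] - \Pr[X=1 \mid Y=-1])$. The event $Y=1$ is exactly $\abs{\ip{y,x}_q}\le B$. By \cref{lemma:dist_adv}, the absolute value of this difference is at least $2\cdot \eps/4t$.

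The sign is not controlled by \cref{lemma:dist_adv}. We fix it by replacing $\algP$ with $-\algP$ if necessary, i.e.\ by negating the output of $\algD$. This is allowed because $\algP$ is ours to construct. Alternatively, we assume without loss of generality that $\algD$ outputs $1$ more often on $\cH_0$-type inputs, and trace the sign through the hybrid argument. Either way, the covariance is at least $\frac{\Var(S)}{2}\cdot \frac{2\eps}{4t} \ge \frac{\Var(S)}{2}(\eps/4t)$, which is the first inequality.

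For the second inequality we need $\Var(S) = 4\Pr[S=1]\Pr[S=-1] \ge \Omega(B/q)$, with constants matching $\frac{\Var(S)}{2}\cdot\frac{\eps}{4t}\ge \frac{B\eps}{2qt}$. Since $x\in\cP(\Integer{q}^n)$ has $\gcd(x,q)=1$, the map $y\mapsto \ip{y,x}_q$ has image $\Integer{q}$ with every fibre of equal size, so $\ip{y,x}_q$ is uniform on $\Integer{q}$. Hence $\Pr[S=1] = (2B+1)/q$, because the set $\{z : \abs{z}\le B\}$ has exactly $2B+1$ elements when $2B+1\le q$. The cap $B \le \floor{q/4-1}$ in the definition of $\cB$ guarantees this, and also gives $\Pr[S=1]\le 1/2$, so $\Pr[S=-1]\ge 1/2$. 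Therefore $\Var(S)\ge 4\cdot\frac{2B+1}{q}\cdot\frac12 \ge \frac{4B}{q}$.

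Plugging in gives $\frac{\Var(S)}{2}\cdot\frac{2\eps}{4t} \ge \frac{4B}{q}\cdot\frac{\eps}{4t} = \frac{B\eps}{qt}$, which is at least $\frac{B\eps}{2qt}$. If one keeps only the weaker factor $\eps/4t$ as written, one gets $\frac{2B}{q}\cdot\frac{\eps}{4t} = \frac{B\eps}{2qt}$, which is exactly the stated bound. The only delicate point is the sign and orientation, handled by the normalisation above; everything else is direct substitution.
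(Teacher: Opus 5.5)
Your proposal is correct and follows the paper's proof. Like the paper, you apply \cref{lemma:correlation} to $X=\algP$ and $Y=S$, fix the sign without loss of generality (harmless downstream, since negating $\algP$ only negates its Fourier coefficients), and bound $\Var(S)=4\cdot\frac{2B+1}{q}\bigl(1-\frac{2B+1}{q}\bigr)\ge 4B/q$ via the cap $B\le\floor{q/4-1}$. Your extra bookkeeping is accurate: the factor $2$ from turning the probability gap into a gap of conditional expectations, and the uniformity of $\ip{y,x}_q$ from $\gcd(x,q)=1$. It makes explicit what the paper leaves implicit, and shows the stated bound holds with a factor of $2$ to spare.
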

\begin{proof}
    From \cref{lemma:dist_adv}, without loss of generality,
    \[
        \Pr[\algP(y, A, \cR(Ax)) = 1 | S(y,x) = 1] - \Pr[\algP(y, A, \cR(Ax)) = 1 |  S(y,x) = -1] \geq \eps/4t
    \]
    Then the statement follows from \cref{lemma:correlation}. The calculation of the variance of $S$ is shown as follows:
    \[
        \Var(S) = 4 \Pr[S = 1]\Pr[S = -1] = 4(\dfrac{2B + 1}{q})(1 - \dfrac{2B + 1}{q}) \geq \dfrac{4B}{q} \hfill\qedhere
    \]
    Where the last inequality is from $B \leq \floor{q/4 - 1}$
\manualqed\end{proof}

\subsection{Heavy Fourier Coefficients}

\label{sec:heavy_fourier_coeff}

To analyse the Fourier coefficients of $\algP$, it is more convenient to fix the input problem instance $(A, x, \cR(Ax))$ to be any good instance. We say an instance $(A, x, \cR(Ax))$ is $\eps$-good if 
\[
    \expec{y \gets \Integer{q}^n}{(\algP(y, A, \cR(Ax)) - \mu_P)(S(y, x) - \mu_S)} \geq \eps
\]
By definition, $(\algP - \mu_P)(S - \mu_S)$ takes value in $[-4, 4]$. By Markov's inequality (\cref{prop:good_fraction}), 
\begin{equation}
\label{eq:corrstd}
    \expec{y \gets \Integer{q}^n}{(\algP(y, A, \cR(Ax)) - \mu_P)(S(y, x) - \mu_S)} \geq \frac{\eps}{2}
\end{equation}
for a $\eps/8$ fraction of $A, x$, and $\mathcal{R}$ (if randomized).  We fix these choices from here on and view $\algP$ and $S$ as functions of $y$ only.

\[ \algP(y) = \algP(y, A, \cR(Ax)) \text{ and } S(y) = S(y, x) \]

\begin{restatable}{proposition}{goodfraction}
    \label{prop:good_fraction}
    If $\abs{X} \leq 4$ and $\exp{X} \geq \eps$, then for any random variable $Y$, we have $\exp{X | Y} \geq \eps/2$ with probability at least $\eps/8$ over $Y$.
\end{restatable}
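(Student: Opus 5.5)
This is a reverse-Markov (averaging) argument: write $Z = \exp{X \mid Y}$, so $Z$ is a random variable that is a function of $Y$. The tower property gives $\exp{Z} = \exp{X} \geq \eps$. Since $\abs{X} \leq 4$ pointwise, conditional expectation preserves the bound, so $\abs{Z} \leq 4$ almost surely. The goal is to lower bound $\Pr[Z \geq \eps/2]$.

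Let $G$ be the event $Z \geq \eps/2$ and split the expectation of $Z$ over $G$ and its complement:
\[
    \eps \leq \exp{Z} = \exp{Z \cdot \mathbbm{1}_G} + \exp{Z \cdot \mathbbm{1}_{\bar G}} \leq 4\Pr[G] + \frac{\eps}{2}\left(1 - \Pr[G]\right).
\]
The first term uses $Z \leq 4$ on $G$, and the second uses $Z < \eps/2$ on $\bar G$.

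Rearranging gives $\Pr[G]\,(4 - \eps/2) \geq \eps/2$. Hence
\[
    \Pr[G] \geq \frac{\eps/2}{4 - \eps/2} \geq \frac{\eps}{8},
\]
where the last step uses $4 - \eps/2 \leq 4$. Note that $\eps \leq 4$ automatically, because $\exp{X} \leq 4$, so the denominator is positive.

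There is no real obstacle. The only points needing care are that $\abs{Z} \leq 4$ holds, which follows because conditional expectation is an average of values in $[-4,4]$, and that on $\bar G$ the bound $Z < \eps/2$ is used directly without needing $Z \geq 0$. In the application, $Y$ is the tuple $(A, x, \cR\text{'s randomness})$ and $X = (\algP - \mu_P)(S - \mu_S)$, which lies in $[-4,4]$ as noted. \cref{cor:correlation} supplies $\exp{X} \geq \eps$, with $\eps$ there replaced by $B\eps/2qt$, and the proposition then yields \cref{eq:corrstd} for an $\eps/8$ fraction of fixings.
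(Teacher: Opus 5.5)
Your proof is correct and follows the paper's argument: split on whether $\exp{X \mid Y}$ reaches $\eps/2$, bound the good part by $4\Pr[G]$ and the rest by $\eps/2$, then rearrange. The only difference is cosmetic. You keep the factor $1-\Pr[G]$ on the bad part, bounding pointwise so no sign issue arises, and get the slightly sharper $(\eps/2)/(4-\eps/2)$. The paper drops that factor and gets $\eps/8$ directly.
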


\begin{proof}
    Let $E$ be the set of $Y$ for which $\exp{X | Y} \leq \eps/2$.  By the law of total expectation, 
    \[ \eps \leq \exp X \leq \exp{X | Y \not\in E}\Pr(Y \not\in E) + \exp{X | Y \in E} \leq 4\Pr(Y \not\in E) + \eps/2 \]
    so that $\Pr(Y \not\in E) \geq \eps/8$.
\manualqed\end{proof}

By Plancherel's formula (see \cref{prop:parseval}), \eqref{eq:corrstd} has an equivalent formulation in the Fourier domain:
\begin{equation}
\label{eq:corrfourier}
\sum_{\chi \neq 0^m} \hat{\algP}(\chi) \cdot \hat{S}(\chi) \geq \frac{\eps}{2}.
\end{equation}
($\hat{S}$ will turn out to be real so we omit complex conjugation.)  

The function $S(y)$ can be expressed as $2f(\ip{y, x}) - 1$, where $f\colon \Integer{q} \to \mathbb{C}$ is the univariate function $1(\abs{t} \leq B)$.  Making the substitution $t = \ip{y, x}$ in the Fourier transform of $f$
\[ f(t) = \sum_{c \in \Integer{q}} \hat{f}(c) \omega^{-ct} \]
gives the Fourier expansion
\[ S(y) = \sum_{c \in \Integer{q}} 2\hat{f}(c) \cdot \omega^{-c\ip{y, x}} - 1 = \sum_{\chi \text{ s.t. } \chi = c\cdot x} 2\hat{f}(c) \cdot \omega^{-\ip{y, \chi}} - 1 
\]

Comparing this to the inverse Fourier Transform (\cref{prop:inverse-ft}) of $S$, we have the following observations:
\begin{itemize}[itemsep=0pt]
    \item For any $\chi$ that is not of the form $c\cdot x$ for some $c\in\Field{q}$, the Fourier coefficients $S(\chi)$ is $0$. 
    \item When $c \neq 0$, $\hat{S}(c \cdot x) = 2\hat{f}(c)$. 
    \item $\hat{S}(0^n) = 2\hat{f}(0^n)-1$
\end{itemize}
Inequality~\eqref{eq:corrfourier} therefore simplifies to 
\begin{equation}
\label{eq:corrf2}
\sum_{c \neq 0} \hat{\algP}(c \cdot x)  \cdot \hat{f}(c) \geq \dfrac{\eps}{4}. 
\end{equation}
As $f$ is a threshold function, most of its Fourier mass is concentrated on low frequencies (see Figure~\ref{fig:f}).

\begin{figure}[ht]
\centering
\begin{tikzpicture}
\begin{axis}[
    width=12cm,
    height=7cm,
    xmin=-15.5,
    xmax=15.5,
    ymin=-.1,
    ymax=.35,
    xtick={-15,-10,-5,0,5,10,15}
]

\addplot+[
    ycomb,
    mark=*,
    mark size=1.5pt,
    samples at={-15,-14,...,15},
] {(1 + 2*cos(deg(2*pi*x/31))
      + 2*cos(deg(4*pi*x/31))
      + 2*cos(deg(6*pi*x/31))
      + 2*cos(deg(8*pi*x/31))) / 31};

\end{axis}
\end{tikzpicture}
\caption{Fourier spectrum of $f$ with $q = 31$ and $B = 4$.}
\label{fig:f}
\end{figure}

\begin{restatable}{claim}{lightfourier}
    \label{claim:fourier_conc}
    Let $\tau \in \Integer{}^+$ and define the light weight set $\cW_l = \{c \in \Integer{q}\ |\ |c| > \tau \}$, then
    \[
        \sum_{c \in \cW_l} \abs{\hat{f}(c)}^2 \leq 1/2\tau
    \]
\end{restatable}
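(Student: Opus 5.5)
We compute $\hat f$ explicitly and bound the tail by a Dirichlet-kernel estimate. By \cref{def:ft}, with $f(t)=1(\abs{t}\le B)$ on $\Integer{q}$,
\[
\hat f(c) = \frac1q \sum_{t=-B}^{B} \omega_q^{ct} = \frac{1}{q}\cdot\frac{\sin\bigl(\pi c(2B+1)/q\bigr)}{\sin(\pi c/q)}
\]
for $c\neq 0$, where $c$ is identified with its representative in $(-q/2,q/2]$. In particular $\hat f$ is real, consistent with the remark after \eqref{eq:corrfourier}. For $\abs{c}\le q/2$ we have $\abs{\sin(\pi c/q)}\ge 2\abs{c}/q$, since $\sin x\ge 2x/\pi$ on $[0,\pi/2]$. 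Bounding the numerator by $1$ then gives $\abs{\hat f(c)}\le \frac{1}{q}\cdot\frac{q}{2\abs{c}} = \frac{1}{2\abs{c}}$.

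Summing over $\cW_l$, each magnitude $\abs{c}=j$ with $\tau< j\le q/2$ occurs at most twice, so
\[
\sum_{c\in\cW_l}\abs{\hat f(c)}^2 \le 2\sum_{j>\tau}\frac{1}{4j^2} = \frac12\sum_{j>\tau}\frac1{j^2}\le \frac12\cdot\frac1\tau,
\]
using $\sum_{j>\tau} 1/j^2 \le \int_\tau^\infty dx/x^2 = 1/\tau$. This gives the claimed bound $1/2\tau$.

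The only delicate points are bookkeeping. First, the lift of $c$ must be taken in $(-q/2,q/2]$ so that the sine lower bound applies; for even $q$ the value $c=q/2$ is counted once, which only helps. Second, the Dirichlet-kernel formula requires $c\neq 0$ and $q\nmid c$, which holds for every $c\in\cW_l$. The bound uses no property of $B$ beyond the numerator being at most $1$, so it holds uniformly in $B$. If the sine estimate looks lossy, the fallback is the trivial bound $\abs{\hat f(c)}\le(2B+1)/q$ for small $c$, but it is not needed for the tail $\abs{c}>\tau$.
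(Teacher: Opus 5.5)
Your proposal is correct and follows essentially the same route as the paper: the Dirichlet-kernel bound $\abs{\hat f(c)}\le 1/(q\abs{\sin(\pi c/q)})$ from \cref{prop:fourier_coeff}, the estimate $\sin x\ge 2x/\pi$ giving $\abs{\hat f(c)}\le 1/(2\abs{c})$, pairing of $\pm c$, and an integral comparison showing $\sum_{j>\tau}1/j^2\le 1/\tau$. The only differences are cosmetic: you bound $\abs{\hat f(c)}$ before squaring rather than after, and you integrate up to $\infty$ instead of $q/2$.
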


\begin{restatable}{claim}{heavyfourier}
    \label{claim:fourier_conc_heavy}
    Let $\tau \in \Integer{}^+$ and define the heavy weight set $\cW_h = \{c \in \Integer{q}\ |\ 0 < |c| \leq \tau \}$, then
    \[
        \sum_{c \in \cW_h} \abs{\hat{f}(c)} \leq 1 + \ln \tau
    \]
\end{restatable}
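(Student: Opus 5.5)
We will compute $\hat f(c)$ in closed form and bound each term by an elementary estimate, then sum over $1\le |c|\le \tau$. By \cref{def:ft}, for $c\neq 0$,
\[
\hat f(c) = \frac{1}{q}\sum_{t=-B}^{B}\omega_q^{ct} = \frac{1}{q}\Bigl(1 + 2\sum_{s=1}^{B}\cos(2\pi cs/q)\Bigr),
\]
which is a Dirichlet kernel. Summing the geometric series gives
\[
\hat f(c) = \frac{\sin\bigl(\pi c(2B+1)/q\bigr)}{q\sin(\pi c/q)},
\]
where $c$ is taken as its representative with $|c|\le q/2$. We use two bounds on this quantity.

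\textbf{Trivial bound.} Since $f$ is an indicator, $|\hat f(c)|\le \hat f(0) = (2B+1)/q$.

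\textbf{Decay bound.} Since $|\sin(\pi c/q)|\ge 2|c|/q$ for $|c|\le q/2$ (Jordan's inequality), we get $|\hat f(c)|\le \frac{1}{2|c|}$.

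Summing over $c\in\cW_h$, the two signs $\pm c$ contribute a factor $2$, so using only the decay bound,
\[
\sum_{0<|c|\le\tau}|\hat f(c)| \le 2\sum_{c=1}^{\tau}\frac{1}{2c} = H_\tau \le 1+\ln\tau .
\]
This already gives the claim. The harmonic bound $H_\tau\le 1+\ln\tau$ follows by comparing $\sum_{c=2}^{\tau}1/c$ with $\int_1^\tau dx/x$.

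One caveat: if $\tau\ge q/2$, the set $\cW_h$ consists of all nonzero residues with $|c|\le q/2$, each counted once. The bound still holds because we only sum $\frac{1}{2|c|}$ over distinct representatives, and there are at most two per absolute value.

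The main point needing care is the normalization of $\hat f$ relative to \cref{def:ft}, namely the $1/q$ factor and the sign convention of $\omega_q$. The sign convention does not matter, since $f$ is even and so $\hat f$ is real and symmetric. The other point is making sure Jordan's inequality is applied to the centered representative of $c$, so that $|\sin(\pi c/q)|\ge 2|c|/q$ is valid. Note that the trivial bound is not even needed for this claim. It would only matter for a sharper estimate such as $O(1+\ln(\tau B/q))$.
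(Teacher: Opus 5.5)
Your proof is correct and follows essentially the same route as the paper: bound $|\hat f(c)|$ by $1/(q|\sin(\pi c/q)|)$ using the Dirichlet-kernel closed form, apply $\sin x \ge 2x/\pi$ to get $1/(2|c|)$, then pair $\pm c$ to obtain $H_\tau \le 1+\ln\tau$. Your caveat about $\tau \ge q/2$ is slightly more careful than the paper, which writes the pairing step as the equality $2\sum_{c=1}^{\tau}$; in that regime it only holds as an upper bound.
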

Proofs of the two claims are deferred to the next section (\cref{sec:fourier_analysis_f}). We now use these to show that  \eqref{eq:corrf2} forces a heavy Fourier coefficient of the form $\hat{\algP}(c \cdot x)$.

\begin{lemma}
    \label{lemma:fourier_coeff} Assuming~\eqref{eq:corrstd}, there exists 
   $c \in \Integer{q}$ such that $c \neq 0$, $|c| \leq \ceil{32/\eps^2}$ and 
    \[
        \abs{\hat{\algP}(c \cdot x)} \geq \frac{\eps}{ 16\ln(14/\eps)}.
    \]
\end{lemma}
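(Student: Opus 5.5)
Starting from \eqref{eq:corrf2}, I will split the sum over $c \neq 0$ into the heavy set $\cW_h = \{0<|c|\le\tau\}$ and the light set $\cW_l = \{|c|>\tau\}$, with $\tau = \ceil{32/\eps^2}$. The light part will be shown to be at most $\eps/8$, so the heavy part must be at least $\eps/8$. The heavy part can then be handled by an averaging argument that uses the $\ell_1$ bound on $\hat f$ from \cref{claim:fourier_conc_heavy}.

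For the light part I apply Cauchy--Schwarz:
\[
\Bigl|\sum_{c\in\cW_l}\hat{\algP}(c\cdot x)\hat f(c)\Bigr| \le \Bigl(\sum_{c\in\cW_l}\abs{\hat{\algP}(c\cdot x)}^2\Bigr)^{1/2}\Bigl(\sum_{c\in\cW_l}\abs{\hat f(c)}^2\Bigr)^{1/2}.
\]
Since $\gcd(x,q)=1$, the characters $c\cdot x$ are distinct for distinct $c\in\Integer{q}$. Parseval (\cref{prop:parseval}) applied to the $\pm1$-valued $\algP$ then bounds the first factor by $1$. By \cref{claim:fourier_conc}, the second factor is at most $(1/2\tau)^{1/2}\le (\eps^2/64)^{1/2}=\eps/8$. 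Subtracting from \eqref{eq:corrf2}, which is at least $\eps/4$, gives
\[
\sum_{c\in\cW_h}\hat{\algP}(c\cdot x)\hat f(c)\ \ge\ \eps/8.
\]

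Next, bound the left side by $\max_{c\in\cW_h}\abs{\hat{\algP}(c\cdot x)}\cdot\sum_{c\in\cW_h}\abs{\hat f(c)}$. By \cref{claim:fourier_conc_heavy} the sum is at most $1+\ln\tau$. Hence some $c$ with $0<|c|\le\tau$ satisfies $\abs{\hat{\algP}(c\cdot x)}\ge \eps/(8(1+\ln\tau))$.

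The remaining step is arithmetic, which I expect to be the only fiddly point: we need $8(1+\ln\tau)\le 16\ln(14/\eps)$, i.e.\ $1+\ln\tau\le 2\ln(14/\eps)=\ln(196/\eps^2)$. Since $\eps\le1$, we have $\tau\le 32/\eps^2+1\le 33/\eps^2$. Therefore $1+\ln\tau\le\ln(33e/\eps^2)$, and $33e\approx 89.7<196$, which gives the claimed bound. One caveat concerns the case $\tau\ge q/2$, where $\cW_l$ is empty. The argument is unaffected, since $\cW_h$ then consists of all nonzero $c$ and the claims still apply. I will also check that the normalisation of the heavy-part bound (the factor $2$ relating $\hat S$ to $\hat f$) is consistent with \eqref{eq:corrf2} as stated, and take that inequality as given.
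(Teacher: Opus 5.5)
Your proposal is correct and follows the paper's proof essentially step for step. It uses the same split at $\tau=\lceil 32/\eps^2\rceil$, then Cauchy--Schwarz with Parseval and \cref{claim:fourier_conc} to cap the light part at $\eps/8$, and finally an averaging step against the $\ell_1$ bound of \cref{claim:fourier_conc_heavy}. Your extra remarks fill in details the paper leaves implicit or states loosely: the injectivity of $c\mapsto c\cdot x$ from $\gcd(x,q)=1$, the case $\tau\ge q/2$, and the explicit chain $1+\ln\tau\le\ln(33e/\eps^2)\le\ln(196/\eps^2)$.
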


\begin{proof}
Set $\tau$ to $\ceil{32/\eps^2}$. As explained in the text above, \eqref{eq:corrstd} is equivalent to \eqref{eq:corrf2}. The sum in \eqref{eq:corrf2} splits as
    \[
        \sum_{c \in \cW_h}\hat{\algP}(c \cdot x)\hat{f}(c) + \sum_{c \in \cW_l}\hat{\algP}(c \cdot x)\hat{f}(c) \geq \eps/4. \]
    By the Cauchy-Schwarz inequality, Parseval's formula, and \cref{claim:fourier_conc}, 
    \begin{align*}
        \sum_{c \in \cW_l}\hat{\algP}(c \cdot x)\hat{f}(c) &\leq \sqrt{ \sum_{c \in \cW_l}\abs{\hat{\algP}(c \cdot x)}^2}\sqrt{ \sum_{c \in \cW_l}\abs{\hat{f}(c)}^2}
        \\
        &\leq \sqrt{\expec{y \gets \Integer{q}^n}{\algP(y)^2}}\sqrt{ \sum_{c \in \cW_l}\abs{\hat{f}(c)}^2} \leq \sqrt{\eps^2/64} = \eps/8
    \end{align*}
    Therefore, the sum over the heavy terms is lower bounded by 
    \[
        \sum_{c \in \cW_h}\hat{\algP}(c \cdot x) \cdot \hat{f}(c) \geq \eps / 8
    \] 
    By \cref{claim:fourier_conc_heavy}, the sum of all $|\hat{f}(c)|$ terms is at most $\ln \tau$.  Therefore at least one of the terms $\hat{P}(c \cdot x)$ must have magnitude at least
\[ \frac{\eps/8}{1 + \ln \tau} = \frac{\eps/8}{\ln \ceil{64e/\eps^2}} \geq \frac{\eps}{ 16\ln(14/\eps)} \]
by our choice of $\tau$.
\manualqed\end{proof}

\subsection{Fourier analysis of $f$}
\label{sec:fourier_analysis_f}

Next, we prove \cref{claim:fourier_conc,claim:fourier_conc_heavy}. We will make use of the following proposition.

\begin{proposition}
    \label{prop:fourier_coeff}
    For $c \in \Integer{q}$, $\abs{\hat{f}(c)} \leq \dfrac{1}{q|\sin(c\pi/q)|}$.
\end{proposition}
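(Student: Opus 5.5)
We compute $\hat{f}(c)$ in closed form and then bound the resulting Dirichlet-kernel expression. By \cref{def:ft}, with $f(t) = 1(\abs{t} \leq B)$ on $\Integer{q}$ and $\omega = \omega_q = e^{-2\pi i/q}$,
\[
    \hat{f}(c) = \frac{1}{q}\sum_{t=-B}^{B} \omega^{ct}.
\]
The set $\{t : \abs{t} \leq B\}$ is exactly $\{-B,\dots,B\}$ with no wraparound, because $B \leq \floor{q/4-1}$ (so $2B+1 < q$). For $c = 0$ the sum is $(2B+1)/q$. In that case $\sin(0)=0$, so the right-hand side of the claimed bound is $+\infty$ and there is nothing to prove. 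The same holds for any $c \equiv 0 \pmod q$, so from now on we take $c \in \Integer{q}\setminus\{0\}$, and $\omega^c \neq 1$.

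For such $c$, the sum is a geometric series. Writing $\theta = \pi c/q$, so that $\omega^{c} = e^{-2i\theta}$, the standard symmetric evaluation gives
\[
    \sum_{t=-B}^{B} e^{-2i\theta t} = \frac{e^{i(2B+1)\theta} - e^{-i(2B+1)\theta}}{e^{i\theta}-e^{-i\theta}} = \frac{\sin((2B+1)\theta)}{\sin\theta}.
\]
To derive it, factor out $e^{2i\theta B}$ and multiply numerator and denominator by $e^{-i\theta}$. This is the Dirichlet kernel, and it confirms that $\hat f$ is real, as the paper asserted earlier. The quantity $\sin\theta$ is nonzero because $q \nmid c$. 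Also, $\sin(\pi c/q)$ does not depend on which integer lift of $c$ we choose, up to sign, so $\abs{\sin(c\pi/q)}$ is well defined on $\Integer{q}$.

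Finally we use $\abs{\sin((2B+1)\theta)} \leq 1$ to conclude
\[
    \abs{\hat f(c)} = \frac{\abs{\sin((2B+1)\pi c/q)}}{q\,\abs{\sin(\pi c/q)}} \leq \frac{1}{q\abs{\sin(c\pi/q)}}.
\]
No step is a real obstacle. The only points needing care are checking that the support of $f$ does not wrap modulo $q$ (which the cap $B \le \floor{q/4-1}$ ensures), treating $c=0$ separately, and keeping track of the sign convention of $\omega_q$, which affects neither the absolute value nor the realness of $\hat f$.
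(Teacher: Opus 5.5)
Your proof is correct and takes the same route as the paper's: you sum the geometric series to the Dirichlet kernel $\sin((2B+1)\pi c/q)/(q\sin(\pi c/q))$ and bound the numerator by $1$, and your separate handling of $c=0$ and the no-wraparound check $2B+1<q$ fills in points the paper glosses over. One cosmetic slip: with $e^{2i\theta B}$ factored out as you describe, the symmetrizing multiplier is $e^{i\theta}$ rather than $e^{-i\theta}$, though the identity you state is correct.
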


\begin{proof}
    The Fourier coefficients of $f$ are
    \[
        \hat{f}(c) = \expec{t \gets \Integer{q}}{\mathbf{1}(|t| \leq B) \cdot \omega_q^{c \cdot t}} = \dfrac{1}{q} \sum_{t = -B}^{B} \omega_q^{c \cdot t}
    \]
    Using the geometric progression formula,
    \[
    \dfrac{1}{q} \sum_{t = -B}^{B} \omega_q^{c \cdot t} = \dfrac{\omega_q^{-cB}(1 - \omega_q^{c(2B+1)})}{q(1 - \omega_q^c)} = \dfrac{\omega_q^{-cB}\omega_q^{c(2B+1)/2}}{q\omega_q^{c/2}} \cdot \dfrac{(\omega_q^{-c(2B+1)/2} - \omega_q^{c(2B+1)/2})}{(\omega_q^{-c/2} - \omega_q^{c/2})}
    \]
    Since $\omega_q^{-a} - \omega_q^{a} = -2i\sin(2\pi a/q)$, we can simplify to
    \[
        \dfrac{1}{q} \sum_{t = -B}^{B} \omega_q^{c \cdot t} = \dfrac{\sin(c(2B+1)\pi/q)}{q\sin(c\pi/q)}
    \]
    Since the denominator is always a positive number when $c \in \Integer{q}$  and the numerator is within range $[-1, 1]$, $|\hat{f}(c)| \leq 1/q|\sin(c\pi/q)|$.
\manualqed\end{proof}

\lightfourier*
\begin{proof}
    From \cref{prop:fourier_coeff},
    \begin{align*}
        \sum_{c \in \cW_l} \abs{\hat{f}(c)}^2 \leq \frac{1}{q^2} \cdot \sum_{c \in \cW_l} \dfrac{1}{\sin(c\pi/q)^2}
    \end{align*}
    As $\sin^2$ is an even function,
    \[
        \frac{1}{q^2} \sum_{c \in \cW_l} \dfrac{1}{\sin(c\pi/q)^2} = \dfrac{2}{q^2}\sum_{c = \tau + 1}^{\floor{q/2}} \dfrac{1}{\sin(c\pi/q)^2}
    \]
    As the $\sin$ function is concave downward from $0$ to $\pi/2$, the function is lower-bounded by the line that connects the origin and the point at $(\pi/2,1)$, i.e., $\sin{t} \geq 2t/\pi$.
    \[
        \dfrac{2}{q^2}\sum_{c = \tau + 1}^{\floor{q/2}} \dfrac{1}{\sin(c\pi/q)^2} \leq \dfrac{2}{q^2}\sum_{c = \tau + 1}^{\floor{q/2}} \dfrac{q^2}{4c^2} = \dfrac{1}{2}\sum_{c = \tau + 1}^{\floor{q/2}} \dfrac{1}{c^2}
    \]
    Because $1/t^2$ is a decreasing function for $t > 0$, we can apply the integral bound to obtain
    \[
        \dfrac{1}{2} \sum_{c = \tau + 1}^{\floor{q/2}} \dfrac{1}{c^2} \leq \dfrac{1}{2}\int_{\tau}^{q/2} \dfrac{1}{c^2} dc \leq \dfrac{1}{2\tau}. \hfill\qedhere
    \]
\manualqed\end{proof}

\heavyfourier*
\begin{proof}
From \cref{prop:fourier_coeff}, using $\sin t \geq 2t/\pi$ in the relevant range,
    \begin{align*}
        \sum_{c \in \cW_h} \abs{\hat{f}(c)} \leq \sum_{c \in \cW_h} \dfrac{1}{q|\sin(c\pi/q)|} 
        &= 2 \sum_{c=1}^\tau \dfrac{1}{q \sin(c\pi/q)}\\
        &\leq \sum_{c = 1}^{\tau} \dfrac{1}{c}
 \leq 1 + \int_{1}^{\tau} \dfrac{1}{c} dc \leq 1 + \ln \tau. \tag*{\qedhere}
    \end{align*}
\manualqed\end{proof}

\subsection{Learning Significant Fourier Coefficients}
\label{sec:learn_fourier_coefficients}

With \cref{lemma:fourier_coeff}, we know that there exists a small $|c|$ such that the Fourier coefficient $\hat{\algP}(c \cdot x)$ is large, say at least $\alpha$. To find $c \cdot x$, we use the Learning Significant Fourier Coefficients algorithm from Akavia, Goldwasser and Safra~\cite{AGS03} to find all Fourier Coefficients with magnitude at least $\alpha$.

First, we show that the number of such coefficients is at most $1/\alpha^2$.
\begin{proposition}
\label{prop:list-size}
    Let $\alpha \in (0,1]$, $\cW = \{\chi \in \Integer{q}^n \ |\ \abs{\hat{\algP}(\chi)} \geq \alpha \}$, then $|\cW| \leq 1/\alpha^2$
\end{proposition}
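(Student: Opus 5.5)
The plan is a standard Parseval counting argument. Here $\algP$ is the fixed function $y \mapsto \algP(y, A, \cR(Ax))$ from \cref{sec:heavy_fourier_coeff}, which takes values in $\{-1,1\}$, so $\abs{\algP(y)}^2 = 1$ for every $y$. (If $\algP$ is randomized, we take it to mean its expectation over internal coins. That function takes values in $[-1,1]$, so $\abs{\algP(y)}^2 \leq 1$, which is all we need.)

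First, I will apply the special case of \cref{prop:parseval}:
\[
\sum_{\chi \in \Integer{q}^n} \abs{\hat{\algP}(\chi)}^2 = \expec{y \gets \Integer{q}^n}{\abs{\algP(y)}^2} \leq 1 .
\]

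Second, I will restrict the left-hand side to $\cW$. Every term is nonnegative, and each $\chi \in \cW$ contributes at least $\alpha^2$. Hence
\[
\abs{\cW}\cdot \alpha^2 \leq \sum_{\chi \in \cW} \abs{\hat{\algP}(\chi)}^2 \leq \sum_{\chi} \abs{\hat{\algP}(\chi)}^2 \leq 1 ,
\]
which gives $\abs{\cW} \leq 1/\alpha^2$ after dividing by $\alpha^2 > 0$.

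No step here is a real obstacle. The only thing to check is that Plancherel is applied with the paper's normalization. There $\hat{S}$ is defined as an expectation over $y$ and the inverse transform is a plain sum over $\chi$, so the identity is exactly $\Exp\abs{\algP}^2 = \sum_\chi \abs{\hat{\algP}(\chi)}^2$ with no factor of $q^n$. The bound $\abs{\algP}\le 1$ then gives total Fourier mass at most $1$.
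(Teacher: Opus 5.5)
Your proposal is correct and matches the paper's proof. Plancherel with the paper's normalization gives $\sum_{\chi}\abs{\hat{\algP}(\chi)}^2 = \Exp_{y} \abs{\algP(y)}^2 \leq 1$, and each $\chi \in \cW$ contributes at least $\alpha^2$. Your remark that a randomized $\algP$ can be replaced by its average over internal coins, so values in $[-1,1]$ still give mass at most $1$, is a small clarification the paper leaves implicit.
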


\begin{proof}
    Since $\algP: \Integer{q}^{n} \to \{-1, 1\}$, $\expec{y \gets \Integer{q}^n}{\algP(y)^2} = 1$. So
    \[
     |\cW|\alpha^2 = \sum_{\chi \in \cW} \alpha^2   \leq \sum_{\chi \in \cW}\hat{\algP}(\chi)^2 \leq \sum_{\chi \in \Integer{q}^n}\hat{\algP}(\chi)^2 = \expec{y \gets \Integer{q}^n}{\algP(y)^2} \leq 1
    \]
\manualqed\end{proof}

\begin{theorem}[\cite{AGS03}]
    \label{thm:ags}
    Let $\alpha \in (0,1]$. There exists an algorithm that, given a function $\algP: \Integer{q}^{n} \to \{-1, 1\}$ and $\alpha$, outputs $\cW = \{\chi \in \Integer{q}^n \ |\ \abs{\hat{\algP}(\chi)} \geq \alpha \}$, except with probability $o(\alpha)$, and runs in time $\poly(n,\log q, 1/\alpha)$.
\end{theorem}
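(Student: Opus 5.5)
We would prove \cref{thm:ags} with a Goldreich--Levin / Kushilevitz--Mansour style search tree, using the filtering idea of \cite{AGS03} to handle a general modulus $q$. The algorithm only needs query access to $\algP$ (in our application, running $\algP(\cdot, A, \cR(Ax))$ with fresh internal randomness). It fixes the coordinates of the unknown character $\chi$ one at a time. Within each coordinate it performs a binary search over $\Integer{q}$, shrinking an interval of candidate values by half at each step, so there are about $\log q$ refinement levels per coordinate and $n\log q$ levels in total.

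\textbf{Weight estimation.} For a prefix $(a_1,\dots,a_{j-1})\in\Integer{q}^{j-1}$ and an interval $I\subseteq\Integer{q}$, we need the weight
\[
W(a,I) = \sum_{\chi:\ \chi_{<j}=a,\ \chi_j\in I} \abs{\hat{\algP}(\chi)}^2 .
\]
The free coordinates $\chi_{>j}$ are handled exactly by averaging: the quantity $\Exp\bigl[\algP(u,z)\algP(u',z)\,\omega_q^{\ip{a,u-u'}}\bigr]$, with the suffix $z$ shared, projects onto characters whose prefix is $a$. The condition $\chi_j\in I$ cannot be imposed exactly for general $q$, so we use a filter. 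We pick a random shift $s_j$ drawn from a set of the form $\{0,1,\dots,\ell\}\cdot 2^{k}$ (suitably scaled for level $k$), so that its Fourier transform, modulated to the centre of $I$, is close to $1$ on $I$ and small far outside it. Each filtered weight is then an expectation of a product of two $\pm 1$ queries times a unit-modulus phase. A Chernoff bound lets us estimate it to additive accuracy $\alpha^2/8$ with $\poly(\log(nq/\alpha))/\alpha^4$ samples.

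\textbf{Pruning and its cost.} At each level we keep only those $(a,I)$ whose estimated weight is at least $\alpha^2/2$. Every $\chi$ with $\abs{\hat{\algP}(\chi)}\ge\alpha$ contributes weight at least $\alpha^2$ to its own node, so it survives. By Parseval (\cref{prop:parseval}) the total weight is $1$, and each kept node carries weight $\Omega(\alpha^2)$. Even with filter leakage, each character is counted in only $O(1)$ overlapping nodes, so at most $O(1/\alpha^2)$ nodes survive per level. The total number of estimates is therefore $O(n\log q/\alpha^2)$.

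\textbf{Finishing.} At the leaves we obtain $O(1/\alpha^2)$ candidate characters. For each candidate we estimate $\hat{\algP}(\chi)=\Exp_y[\algP(y)\omega_q^{\ip{\chi,y}}]$ directly by sampling, and discard those clearly below $\alpha$. If we insist on outputting exactly $\cW$, the output can be read with a slack in the threshold (e.g.\ all $\chi$ with $\abs{\hat{\algP}(\chi)}\ge\alpha$ together with some in $[\alpha/2,\alpha)$); this is harmless for our purposes because of \cref{prop:list-size}. To get failure probability $o(\alpha)$, we boost each estimate's confidence to $1-\alpha^{2}/\poly(n\log q/\alpha)$ by taking medians. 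A union bound over the $\poly(n,\log q,1/\alpha)$ estimates then gives the claimed running time and error.

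\textbf{Main obstacle.} The hardest step is the filter analysis for non-power-of-two, in particular prime, $q$. We must show that the shift distribution's Fourier transform concentrates on the target interval, so that the estimated filtered weight upper-bounds the true in-interval weight up to a constant factor, and that leakage from far-away characters is small enough that the surviving list stays $O(1/\alpha^2)$ at every level. We would first try the \cite{AGS03} choice of a uniform shift over an arithmetic progression of length $\Theta(1/\alpha^2)$. Its transform is a Dirichlet-kernel-type sum, exactly like $\hat f$ in \cref{prop:fourier_coeff}, so the $1/(q\abs{\sin(c\pi/q)})$ decay can be reused to bound the leakage.
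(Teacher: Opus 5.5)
The paper gives no proof of this statement; it is imported from \cite{AGS03}. The right benchmark is therefore the AGS algorithm itself, and your outline is that algorithm: coordinate-by-coordinate binary search over $\Integer{q}$, filtered weights written as expectations of a product of two queries times a phase, pruning at $\alpha^2/2$, a final direct estimate of each $\hat{\algP}(\chi)$, and median boosting.

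Your caveat about $\cW$ is correct and in fact necessary. No query algorithm can decide $\abs{\hat{\algP}(\chi)}\ge\alpha$ for coefficients arbitrarily close to $\alpha$. What \cite{AGS03} guarantees is a list of size $O(1/\alpha^2)$ containing $\cW$, and that is all the proof of \cref{theorem:main} uses, together with \cref{prop:list-size}. Querying $\algP$ with fresh internal randomness is also fine. You then learn the heavy coefficients of $\bar{\algP}(y)=\Exp_r[\algP(y;r)]$, which satisfies $\Exp_y[\bar{\algP}(y)^2]\le 1$; by linearity, this is the function to which \eqref{eq:corrstd} applies.

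The one step that fails as written is the filter. Write $h$ for the uniform distribution on the shift set.

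\emph{Dilated progression.} The set $\{0,1,\dots,\ell\}\cdot d$ with step $d>1$ has $\abs{\hat h(\chi)}=\abs{\sin(\pi(\ell+1)d\chi/q)}/\bigl((\ell+1)\abs{\sin(\pi d\chi/q)}\bigr)$. This depends only on $d\chi\bmod q$, so it is close to $1$ on $d$ separate clusters near $\chi\approx mq/d$, $m=0,\dots,d-1$; it is a comb, not a localized window. With $d=2^k$ at level $k$ (intervals of width about $q/2^k$), the teeth are spaced exactly like the level-$k$ intervals. A node's estimate then collects weight from essentially all of them, it is not small far from $I$, and the survivor count is not controlled.

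\emph{Fixed-length block.} A block of fixed length $\Theta(1/\alpha^2)$ fails differently. Its passband has fixed width about $q\alpha^2$, so once intervals are narrower than that, each heavy character lights up many neighbouring nodes.

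\emph{The correct filter.} The filter in \cite{AGS03} is a block of consecutive integers $\{0,1,\dots,L-1\}$ whose length grows with the level, $L\approx q/(2\abs{I})$. Only $\poly(1/\alpha)$ shifts are sampled from it. Its transform is the Dirichlet kernel of \cref{prop:fourier_coeff}, which gives two bounds:
\begin{itemize}
\item $\abs{\hat h(\chi_j-c_I)}^2\ge 0.8$ whenever $\chi_j\in I$;
\item $\abs{\hat h(\chi)}\le q/(2L\abs{\chi})\approx\abs{I}/\abs{\chi}$.
\end{itemize}
Hence, for every $\chi$, the sum $\sum_I\abs{\hat h(\chi_j-c_I)}^2$ over all intervals at a level is $O(1)$. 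Since distinct prefixes are disjoint, the filtered weights of all nodes at a level total $O(1)$.

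That summed bound is what yields $O(1/\alpha^2)$ survivors with your threshold $\alpha^2/2$ and accuracy $\alpha^2/8$. Your phrasing ``each character lies in $O(1)$ nodes'' is not the right statement, since no filter with polynomial tails satisfies it. A heavy $\chi$ still gives its own node estimated weight at least $0.8\alpha^2-\alpha^2/8>\alpha^2/2$, so it survives pruning.
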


\subsection{Proof of \cref{theorem:main}}

From \cref{lemma:dist_adv} and \cref{cor:correlation}, we get an algorithm $\algP$ such that
\[
    \exp{(\algP(y, A, \cR(Ax)) - \mu_P)(S(y, x) - \mu_S)} \geq  \dfrac{B\eps}{2qt}
\]
Let $\eps^\prime = (B\eps/2qt)$. With probability at least $\eps^\prime/8$ we get a $\eps^\prime/2$-good instance $(A, x, \cR(Ax))$. In a good instance, \cref{lemma:fourier_coeff} says that there exists $c \in \Integer{q}$ such that $c \neq 0$, $|c| \leq \ceil{32/(\eps^\prime)^2}$ and 
\[
    \abs{\hat{\algP}(c \cdot x)} \geq \dfrac{\eps^\prime}{16 \ln (14/\eps^\prime)} = \alpha.
\]
Then, we run the \cite{AGS03} algorithm from \cref{thm:ags} on $\algP$ with $\alpha$ to obtain a list that contains $c \cdot x$. By \cref{prop:list-size}, this list has size at most $\tilde{O}(1/(\eps^\prime)^2)$. The final runtime of the algorithm is $\poly(n,m,\log q, 1/\eps^\prime, T)$, where $T$ is the running time of the algorithm for the Decision Noisy Linear Equations problem. 

The success probability of the algorithm depends on whether we obtain an $\eps^\prime/2$-good instance and whether the \cite{AGS03} algorithm from \cref{thm:ags} returns the list correctly. By the union bound, the algorithm will succeed with probability $\eps^\prime/8 - o(\eps^\prime) = \Omega(\eps^\prime)$
\section{Corollaries}
\label{sec:corollaries}
We can derive corollaries from \cref{theorem:main} for the specific noise functions corresponding to LWE and LWR. From \cref{prop:coprime_secret}, the total variation distance of $U(\cP(\Integer{q}^n))$ and $U(\Integer{q}^n)$ is small. Therefore, we write the following with the distribution of the secret being $U(\Integer{q}^n)$ instead.

\begin{corollary}[Partial Secret Recovery for LWE/LWR]
    \label{corollary:LWE-LWR}
    Consider any $n,m,q \in \mathbb{Z}^+$, and any $\eps \in (0, 1]$. Consider the LWE/LWR problem over $\Integer{q}$ with sample count $m$, secret dimension $n$.
    \begin{itemize}[itemsep=0pt]
        \item Under LWE noise with parameter $\sigma \in \Reals$, $\sigma < q$, set $\eps^\prime = \sigma\eps^2/q\sqrt{m}$, or,
        \item Under LWR noise with rounding modulus $p \in \Integer{}^+$, $p < q$, set $\eps^\prime = \eps^2/pm$
    \end{itemize}
    Suppose there exists an algorithm for the corresponding Decision LWE/LWR that runs in time $T$ and has advantage $\eps$.

    Then there exists an algorithm that, given the problem instance, outputs a list of size $\tilde{O}(1/(\eps^\prime)^2)$ that contains $c \cdot x$, where $x$ is the solution, for some $c \in \Integer{q} \setminus \{0\}$ such that $|c| \leq \tilde{O}(1/(\eps^\prime)^2)$. This algorithm runs in time $\poly(n, m, \log q, 1/\eps^\prime,T)$ and successfully outputs such a list with probability $\tilde{\Omega}(\eps^\prime)$.
    
\end{corollary}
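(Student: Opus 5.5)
The plan is to instantiate \cref{theorem:main} with the concrete tolerance bounds of \cref{prop:LWE_noise_flooding} and \cref{prop:LWR_noise_resilient}, and to account for the change of secret distribution from $\cP(\Integer{q}^n)$ to $U(\Integer{q}^n)$ via \cref{prop:coprime_secret}. First, I handle the secret distribution. The decision algorithm $\algD$ has advantage $\eps$ when $x \gets U(\Integer{q}^n)$. By \cref{prop:coprime_secret} and the data processing inequality, its advantage when $x \gets \cP(\Integer{q}^n)$ is at least $\eps - \log q/2^n$. Symmetrically, the search guarantee of \cref{theorem:main} for $x \gets \cP(\Integer{q}^n)$ transfers to $x\gets U(\Integer{q}^n)$ with an additive loss of $\log q/2^n$ in success probability. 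I will assume $\log q/2^n \le \eps/2$ (otherwise $n = O(\log\log q + \log(1/\eps))$ and brute force over $\Integer{q}^n$ is not polynomial, so this regime needs the mild side condition, which I expect is implicit in the $\tilde{\Omega}$ notation). Then I apply the theorem with advantage $\eps/2$, which only changes constants.

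Next, I lower-bound $B = \cB(\cR, \eps/4t)$ with $t = \ceil{4\log(qm/\eps)}$.
\begin{itemize}
\item For LWE, \cref{prop:LWE_noise_flooding} gives $B \ge \min(\eps\sigma/(2t\sqrt m), \floor{q/4-1})$. Hence
\[
\frac{B\eps}{2qt} \ge \min\!\Big(\frac{\sigma\eps^2}{4t^2 q\sqrt m}, \frac{\eps\floor{q/4-1}}{2qt}\Big).
\]
Since $\sigma<q$ and $\eps\le 1$, the first term is the smaller one up to constants, so $\eps^\prime_{\text{thm}} = \tilde{\Omega}(\sigma\eps^2/q\sqrt m)$.
\item For LWR, \cref{prop:LWR_noise_resilient} gives $B \ge \min(q\eps/(4tpm), \floor{q/4-1})$, so $\eps^\prime_{\text{thm}} = \tilde{\Omega}(\eps^2/pm)$.
\end{itemize}
The polylogarithmic loss comes from $t^2 = O(\log^2(qm/\eps))$.

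A care point is that $B$ must be a positive integer, since $B\ge 1$ is implicitly needed. If the bound falls below $1$, then $\eps^\prime$ as defined in the corollary is below roughly $1/q$. In that case the claimed complexity $\poly(1/\eps^\prime)$ already allows a $\poly(q)$ guessing reduction (e.g.\ $B=0$ still gives a nontrivial threshold function with $\Var(S)\approx 4/q$). I would handle this case by noting that the argument of \cref{theorem:main} goes through with $B=0$, using $2B+1\ge 1$ in \cref{cor:correlation}.

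Finally, I translate the conclusions. The list size is $\tilde{O}(1/(\eps^\prime_{\text{thm}})^2)=\tilde{O}(1/(\eps^\prime)^2)$, and the bound $|c|\le\ceil{32/(\eps^\prime_{\text{thm}})^2}=\tilde{O}(1/(\eps^\prime)^2)$. The running time $\poly(n,m,\log q,1/\eps^\prime_{\text{thm}},T)$ equals $\poly(n,m,\log q,1/\eps^\prime,T)$ since $\log(qm/\eps)$ is polynomial in these quantities. The success probability is $\Omega(\eps^\prime_{\text{thm}})-\log q/2^n=\tilde{\Omega}(\eps^\prime)$. The main obstacle is bookkeeping the edge regimes rather than the core argument: ensuring the $\floor{q/4-1}$ cap and the $B\ge1$ issue do not dominate, and absorbing the $\log q/2^n$ term. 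The first two I handle by the case analysis above. For the last, I will compare $\log q/2^n$ against $\eps^\prime$ and fall back to an exhaustive argument when it is not negligible.
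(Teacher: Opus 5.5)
Your proposal takes the paper's route. The paper's proof plugs the tolerance bounds of \cref{prop:LWE_noise_flooding} and \cref{prop:LWR_noise_resilient} into \cref{theorem:main}, after remarking via \cref{prop:coprime_secret} that one may switch from $\cP(\Integer{q}^n)$ to uniform secrets.

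Your bookkeeping is more careful than the paper's, and it is correct:
\begin{itemize}
\item You evaluate the tolerance at level $\eps/4t$, as \cref{theorem:main} requires. The paper writes it at level $\eps$; the factor $4t$ is absorbed in the tildes.
\item Your $B=0$ fallback is sound. When $\ip{y,x}_q=0$ the hybrid step of \cref{lemma:dist_adv} is exact, $\Var(S)=\frac4q(1-\frac1q)\ge\frac2q$, and the Fourier estimates hold verbatim. This gives $\eps'_{\text{thm}}\ge\eps/(4qt)=\tilde{\Omega}(\eps')$ whenever the tolerance bound is below $1$.
\item The same fallback covers $q<8$. There the cap $\floor{q/4-1}\le 0$ would make \cref{theorem:main} vacuous.
\end{itemize}

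One step fails as written: transferring the search guarantee from $x\gets\cP(\Integer{q}^n)$ to $x\gets U(\Integer{q}^n)$ with an additive loss of $\log q/2^n$.
\begin{itemize}
\item This gives $\tilde{\Omega}(\eps')$ only if $\log q/2^n\ll\eps'_{\text{thm}}$. Your assumption $\log q/2^n\le\eps/2$ does not imply that, since $\eps'_{\text{thm}}\le\eps/8t$ always.
\item The exhaustive fallback you propose does not rescue it. As you observe yourself, enumerating $\Integer{q}^n$ is not polynomial in $\log q$ in that regime.
\item The fix is to condition instead of using total variation. $U(\Integer{q}^n)$ conditioned on landing in $\cP(\Integer{q}^n)$ is exactly $U(\cP(\Integer{q}^n))$. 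So the success probability is at least $\Pr[x\in\cP(\Integer{q}^n)]\cdot\Omega(\eps'_{\text{thm}})\ge\frac12\,\Omega(\eps'_{\text{thm}})$.
\end{itemize}
The loss is therefore multiplicative. The only side condition you need is $\log q/2^n\le\eps/2$ on the decision side, which the paper also tacitly assumes when it calls the distance in \cref{prop:coprime_secret} small.
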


\begin{proof}
    For LWE, by \cref{prop:LWE_noise_flooding}, set $B = \min(2\eps\sigma/\sqrt{m}, \floor{q/4 - 1})$. For LWR, by \cref{prop:LWR_noise_resilient}, set $B = \min(q\eps/pm, \floor{q/4 - 1})$. The rest follows from \cref{theorem:main}.  
\end{proof}

We note that, assuming that the search problem and decision problem are defined over the same secret distribution, our reduction also applies to secret distributions other than uniform over $\Integer{q}^m$, as long as the secret $x$ satisfies 
\[ \gcd(x_1, \dots, x_n, q) = 1 \]
with very high probability.

\subsection{Exact Secret Recovery}

So far, we have shown that our algorithm can compute a list of vectors that contains some small multiple of the secret. Under certain conditions, the exact secret can be recovered from this list. We describe a few sufficient conditions for the recovery.

\medskip
The first requirement for recovery is the existence of an efficient verification algorithm $\algV$ for the Noisy Linear Equations problem.

\begin{definition}[Verification Algorithm] A \emph{verification algorithm} $\algV$ for a Noisy Linear Equations problem is an algorithm in which the following holds with high probability over $A, x$, and $\mathcal{R}$:
\[
    \forall x^\prime \in \Integer{q}^n, \algV(x^\prime, A, \cR(Ax)) =
    \begin{cases}
        1 &\text{if}\quad x^\prime = x\\
        -1 &\text{if}\quad x^\prime \neq x
    \end{cases}
\]
\end{definition}
For typical Noisy Linear Equations problems, such a verification algorithm exists when $m$ is large enough, and the error is small enough for unique recovery. In particular, a verification algorithm for LWE exists when $n \log q  \ll m \log (q/\sigma)$, while it exists for LWR when $n \log q \ll m \log p$.

\medskip
The next requirement is to be able to recover $x \in \Integer{q}^n$ given $c \cdot x$ and $c$. 

\begin{definition}[Recovery Algorithm]
    A \emph{recovery algorithm} $\algG$ for a Noisy Linear Equations problem is an algorithm such that, for all $x$, and $c \in \Integer{q} \setminus \{0\}$:
    \[
        \algG(c \cdot x, c) = x
    \]
\end{definition}

Some settings were such a $\algG$ exists are when $q$ is a prime, or if we are guaranteed that $c \cdot \max_{i} x_i \leq q$. For example, when the secret $x$ is uniform over $\{0, 1\}^{n}$ and  $|c| < q$ (Binary secret LWE problem \cite{DBLP:conf/stoc/BrakerskiLPRS13}).

\medskip
Since the algorithm's output list size is $\tilde{O}(1/(\eps^\prime)^2)$ and $|c| \leq \tilde{O}(1/(\eps^\prime)^2)$, in time $\tilde{O}(1/(\eps^\prime)^4)$ we can try all candidates $c$ on every element in the output list. If $\algG$ returns $x$ on any one of them, the verification algorithm $\algV$ would tell us that we got the right answer. 

Suppose a verification algorithm does not exist, but $\algG$ exists. We can still perform recovery at the cost of lowering the success probability by randomly choosing an element from the output list and randomly choosing a $c$. The success probability of the search algorithm would then become $\Omega((\eps^\prime)^5)$.

\begin{corollary}[Search-to-Decision Reduction for LWE/LWR]
    \label{theorem:search-to-decision}
    Consider any $n,m,q,p \in \mathbb{Z}^+$, a noise function $\cR: \Integer{q}^{m} \to \Integer{p}^{m}$, and any $\eps \in (0, 1]$. For the LWE/LWR over $\Integer{q}$ with sample count $m$, secret dimension $n$,
    \begin{itemize}[itemsep=0pt]
        \item Under LWE noise with parameter $\sigma \in \Reals$, $\sigma < q$, set $\eps^\prime = \sigma\eps^2/q\sqrt{m}$
        \item Under LWR noise with rounding modulus $p \in \Integer{}^+$, $p < q$, set $\eps^\prime = \eps^2/pm$
    \end{itemize}
     Suppose there exists a \emph{Decision algorithm} that runs in time $T_\algD$ and has advantage $\eps$, and a \emph{Recovery algorithm} that runs in time $T_\algG$. 
     
     Then there exists a \emph{Search algorithm} that runs in time $\poly(n, m, \log q, 1/\eps^\prime, T_\algD, T_\algG)$ with success probability at least $\tilde{\Omega}((\eps^\prime)^5)$.

     Furthermore, if a \emph{Verification algorithm} that runs in time $T_\algV$ exists, there exists a \emph{Search algorithm} that runs in time $\poly(n, m, \log q, 1/\eps^\prime, T_\algD, T_\algV, T_\algG)$ with success probability at least $\tilde{\Omega}(\eps^\prime)$. Specifically, this applies to LWE when $n \log q  \ll m \log (q/\sigma)$, $\sigma \ll o(p)$ and LWR when $n \log q \ll m \log p$, $p \ll q$
\end{corollary}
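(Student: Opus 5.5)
The plan is to build the search algorithm directly on top of \cref{corollary:LWE-LWR}. That corollary already produces, with probability $\tilde{\Omega}(\eps')$ and in time $\poly(n,m,\log q,1/\eps',T_\algD)$, a list $L$ of size $\tilde{O}(1/(\eps')^2)$ containing $c\cdot x$ for some nonzero $c$ with $|c| \le C := \tilde{O}(1/(\eps')^2)$. It is obtained from \cref{theorem:main} together with \cref{prop:LWE_noise_flooding} and \cref{prop:LWR_noise_resilient}. Since the main theorem assumes $x \gets \cP(\Integer{q}^n)$, we first pass to a uniform secret using \cref{prop:coprime_secret}. This costs only an additive $\log q/2^n$ in success probability, which is negligible compared to $\eps'$ in the relevant regime. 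So the remaining task is to turn the list into the secret.

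\emph{Without verification.} Run the list algorithm, pick $w \gets L$ uniformly, pick $c' \gets \{\pm1,\dots,\pm C\}$ uniformly, and output $\algG(w,c')$. Conditioned on the list algorithm succeeding, we have $w = c\cdot x$ with probability at least $1/|L|$, and $c' = c$ with probability at least $1/(2C)$. In that event, the definition of $\algG$ gives output $x$. The total success probability is therefore
\[
  \tilde{\Omega}(\eps') \cdot \tilde{\Omega}\bigl((\eps')^2\bigr) \cdot \tilde{\Omega}\bigl((\eps')^2\bigr) = \tilde{\Omega}\bigl((\eps')^5\bigr),
\]
and the running time is that of the list algorithm plus one call to $\algG$.

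\emph{With verification.} Instead of guessing, enumerate all pairs $(w,c') \in L \times \{\pm1,\dots,\pm C\}$. There are $\tilde{O}(1/(\eps')^4)$ such pairs. For each, compute $x' = \algG(w,c')$ and output $x'$ as soon as $\algV(x',A,\cR(Ax)) = 1$. When the list is good, the true pair appears in the enumeration, so $x$ is tested. By the definition of $\algV$, outside a small-probability failure event over $(A,x,\cR)$, no incorrect $x'$ is ever accepted and $x$ is accepted. A union bound then gives success probability $\tilde{\Omega}(\eps') - \Pr[\algV \text{ fails}]$. This is still $\tilde{\Omega}(\eps')$ provided the verifier's failure probability is $o(\eps')$, which we take as the meaning of ``high probability''. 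The running time is $\poly(n,m,\log q,1/\eps',T_\algD,T_\algV,T_\algG)$.

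\emph{The ``specifically'' clause.} It remains to exhibit $\algG$ and $\algV$ in the stated regimes. For prime $q$, define $\algG(w,c) = c^{-1}w \bmod q$. For $\algV$, accept $x'$ iff $\cR(Ax)-Ax'$ is small coordinatewise: at most roughly $\sigma\sqrt{\log m}$ in each coordinate for LWE, or consistent with rounding, i.e. $\round{p(Ax')/q} = \cR(Ax)$, for LWR. The true secret passes except with negligible probability, by a Gaussian tail bound in the LWE case and by definition in the LWR case. A wrong $x'$ passes only if $A(x'-x)$ lands in a small box. For a fixed nonzero difference $d$ with $q$ prime, $Ad$ is uniform, so this happens with probability about $(\sigma\sqrt{\log m}/q)^m$ for LWE, or $(1/p)^m$ up to constants for LWR. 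A union bound over the $q^n$ possible differences is small precisely when $n\log q \ll m\log(q/\sigma)$, respectively $n\log q \ll m\log p$.

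The main obstacle is this last counting step. It is where the parameter hypotheses enter, and it needs care with constants: polylogarithmic slack in the box size for LWE, and boundary effects of rounding for LWR. The rest is bookkeeping of probabilities.
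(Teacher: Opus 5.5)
Your proposal is correct and follows the paper's argument. You enumerate $L\times\{\pm1,\dots,\pm C\}$ and filter with $\algV$, or guess a single pair at random, which multiplies the success probability by $\tilde{\Omega}((\eps')^4)$; your explicit union bound against verifier failure and your sketch of $\algG$ and $\algV$ for prime $q$ fill in details the paper only asserts.

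One caveat on that sketch concerns LWR with $p=2$, where the ``up to constants'' step breaks. Agreement of roundings only confines each coordinate of $A(x'-x)$ to a box of half-width about $q/p$, which for $p=2$ is all of $\Integer{q}$. In fact $\cR(-Ax)=\cR(Ax)$ exactly when $q$ is odd, so $x$ and $-x$ are indistinguishable and no verifier exists. The informal clause should therefore be read with $p\ge 3$.
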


\iflncs
\section{Acknowledgments}
AI tools were used as typing assistants for grammar and basic editing.
\else
\section{Acknowledgments}
This work was supported by the National Research Foundation, Singapore, under its NRF Fellowship programme, award no. NRF-NRFF14-2022-0010 and by an NSERC Discovery Grant. AI tools were used as typing assistants for grammar and basic editing.
\fi

\iflncs
\bibliographystyle{splncs04.bst}
\else
\bibliographystyle{alpha}
\fi
\bibliography{refs}

\appendix
\section{Supplementary Proofs}
\subsection{Hybrid Argument}

\hybridargument*

\begin{proof}[Proof of \cref{prop:hybrid_argument}.]
    \label{proof:hybrid_argument}
    Using $TV\Big(\cH_t, (A, \cR(b))\Big) \leq \eps^2/2qt$ from \cref{claim:terminal_hybrid},
    \begin{align*}
        &\abs{\prob{i \gets [0, t-1]}{D(\cH_{i}) = 1} - \prob{i \gets [0, t-1]}{D(\cH_{i+1}) = 1}}\\ 
        = &\abs{\sum_{i = 0}^{t-1} \dfrac{1}{t}\Pr[D(\cH_{i}) = 1] - \dfrac{1}{t}\Pr[D(\cH_{i+1}) = 1]}\\
        = &\abs{\dfrac{1}{t}(\Pr[D(\cH_{0}) = 1] - \Pr[D(\cH_{t}) = 1])} \\
        \geq &\eps/t - \eps^2/2qt \geq \eps/2t \hfill
    \end{align*}
\end{proof}
\end{document}